\newtheorem{theorem}{Theorem}[section]
\newtheorem{lemma}[theorem]{Lemma}
\newtheorem{fact}[theorem]{Fact}
\newtheorem{corollary}[theorem]{Corollary}
\newtheorem{prop}[theorem]{Proposition}
\theoremstyle{remark}
\newcommand{\E}{\mathbf{E}}
\newcommand{\R}{\ensuremath{\mathbb{R}}}
\newcommand{\Z}{\ensuremath{\mathbb{Z}}}
\newcommand{\eps}{\epsilon}
\newcommand{\set}[1]{\{{#1}\}}
\title{Lattice-based Locality Sensitive Hashing is Optimal}
\author{
Karthekeyan Chandrasekaran\thanks{University of Illinois, Urbana-Champaign, USA, Email: {\tt karthe@illinois.edu}.}
\and
Daniel Dadush\thanks{Centrum Wiskunde \& Informatica, The Netherlands, Email:
{\tt dndadush@gmail.com}.} 
\and
Venkata Gandikota\thanks{Purdue University, Email: {\tt \{elena-g, vgandiko\}@purdue.edu}.}
\and Elena Grigorescu\footnotemark[1]
} 
\begin{document}
\maketitle
\def\conf{1} 

\thispagestyle{empty}
\begin{abstract}
Locality sensitive hashing (LSH) was introduced by Indyk and Motwani (STOC `98)
to give the first sublinear time algorithm for the $c$-approximate nearest
neighbor (ANN) problem using only polynomial space. At a high level, an LSH family hashes ``nearby'' points to the same bucket and ``far away'' points to different buckets. The quality of measure of an LSH family is its LSH exponent, which helps determine both query time and space usage.
  
In a seminal work, Andoni and Indyk (FOCS `06) constructed an LSH family based on \emph{random ball partitionings} of space that achieves an LSH exponent of $1/c^2$ for the $\ell_2$ norm, which was later shown to be optimal by Motwani, Naor and Panigrahy (SIDMA `07) and O'Donnell, Wu and Zhou (TOCT `14). Although optimal in the LSH exponent, the ball partitioning approach is computationally expensive. So, in the same work, Andoni and Indyk proposed a simpler and more practical hashing scheme based on \emph{Euclidean lattices} and provided computational results using the $24$-dimensional Leech lattice. However, no theoretical analysis of the scheme was given, thus leaving open the question of finding the exponent of lattice based LSH.

In this work, we resolve this question by showing the existence of lattices
achieving the optimal LSH exponent of $1/c^2$ using techniques from the geometry
of numbers. At a more conceptual level, our results show that optimal LSH space
partitions can have \emph{periodic structure}. Understanding the extent to which additional structure can be imposed on these partitions, e.g.~to yield low space and query complexity, remains an important open problem.
\end{abstract}

\section{Introduction}

Nearest neighbor search (NNS) is a fundamental problem in data structure
design. Here, we are given a database $P$ of $n$ points in a metric space $X$, and the goal is
to build a data structure that can quickly return a closest point in the
database to any queried target. In its exact form, the
problem is known to suffer from the curse of dimensionality, where data
structures that beat brute force search (i.e. a linear scan through the data
points) require either space or query time exponential in the dimension of the space
$X$. To circumvent this issue, Indyk and Motwani~\cite{conf/stoc/IM98} studied a
relaxed version of NNS which allowed for both \emph{approximation} and
\emph{randomization}. In $(c,r)$-approximate nearest neighbor search
(ANN), we are given an approximation factor $c \geq 1$ and distance
threshold $r > 0$, where we must guarantee that for a query $q$, if $d_X(q,P)
\leq r$ then the data structure returns $p \in P$ such that $d_X(q,p) \leq cr$.
When we allow randomization, we only require that any fixed query succeeds with
good probability over the randomness used to construct the data structure. 

In order to address ANN, Indyk and Motwani introduced the concept of Locality Sensitive
Hashing (LSH). A locality sensitive hash function maps ``nearby'' points together and ``far away'' points apart. Indyk and Motwani showed that such LSH function families can be used to build data structures with both sublinear query time and subquadratic space for ANN. LSH is now one of the most popular methods for solving ANN and has found many applications in areas such as cryptanalysis~\cite{conf/crypto/Laarhoven15,conf/soda/BDGL16}, information retrieval and machine learning (see~\cite{SDI06} for a survey).  Important metric spaces for LSH include $\set{0,1}^d$ or $\R^d$ under $\ell_1$ or $\ell_2$-norms, and the sphere $S^{d-1}$ under angular distance. In this work, we focus on $\R^d$ under the $\ell_2$-norm. 

Let $\mathcal{H}$ be a family of functions with an associated probability distribution. An LSH family $\mathcal{H}$ is $(c,r,p_1,p_2)$-sensitive for $X$ if a randomly chosen hash function $h$ from $\mathcal{H}$ maps any two points in $X$ at
distance at most $r$ to the same bucket with probability at least $p_1$ and any two points in $X$ 
at distance at least $cr$ to the same bucket with probablity at most $p_2$. The
measure of quality of the LSH family is the so-called LSH exponent $\rho :=
\ln(1/p_1)/\ln(1/p_2)$. If $X = (\R^d,\ell_2)$ and the maximum computational time for evaluating the hash function $h(x)$ at any point $x\in X$ for any element $h\in \mathcal{H}$ is at most $\kappa$, then one can build a randomized $(c,r)$-ANN data structure that answers 
queries in $O((d+\kappa)n^{\rho(c)}\log_{1/p_2}(n))$ time using $O(dn +
n^{1+\rho(c)})$ space~\cite{conf/stoc/IM98,jour/toc/HIM12}. Similar results hold for other
$d$-dimensional metric spaces. Consequently, much research effort has been directed at
constructing LSH families with both low LSH exponent and fast evaluation times. 

For the $\ell_2$-norm, the first results~\cite{conf/stoc/IM98,conf/vldb/GIM99} gave
constructions achieving an exponent $1/c \pm o(1)$ for $X$ being the hypercube
$\set{0,1}^d$, which was later extended to all of $\R^d$
in~\cite{conf/socg/DIIM04}. For the $\ell_2$-norm over $X=\R^d$, Andoni and Indyk~\cite{conf/focs/AndoniIndyk06} gave the first construction of an LSH hash family achieving a limiting
exponent of $1/c^2$, which was later shown to be optimal
in~\cite{jour/siamjdm/MNP07,journ/acmtoct/OWZ14}. We note that optimality here
holds only for ``classical'' LSH, in which the LSH family depends only on the
ambient metric space and not on the database itself, and that these lower bounds
have been recently circumvented using more sophisticated data dependent
approaches~\cite{conf/soda/AINR14,conf/stoc/AR15}, which we discuss later.

While achieving the optimal exponent, the hash functions from Andoni and Indyk's work ~\cite{conf/focs/AndoniIndyk06} are unfortunately quite expensive to evaluate. Their hash function family can be described as
follows: For a design dimension $k$, a function from the family 
corresponds to $k^{O(k)}$ random shifts
$t_1,t_2,\dots$ of the integer lattice $\Z^k$ which satisfy that every point in
$\R^k$ is at $\ell_2$ distance at most $1/4$ from at least one shift. To map the
database and the queries into $\R^k$, the hash function uses a Gaussian
random projection $G$ mapping $\R^d$ to $\R^k$. The hash value on query $q$ then
equals the closest vector to $Gq$ in $\Z^k+t_i$, where $i$ is the first index
such that $Gq$ is at distance at most $1/4$ from some point in $\Z^k+t_i$. For this family
they prove an upper bound on the LSH exponent of $1/c^2 + O(\log k/\sqrt{k})$, which
tends to $1/c^2$ as $k \rightarrow \infty$. Note that storing the description
of this hash function requires $k^{O(k)}$ space and evaluating it requires
iterating over all shifts which takes $k^{O(k)}$ time. This prohibitive space
usage and running time restricted the use of these hash functions to only very
low dimensions in the context of ANN (i.e. $k$ is restricted to be a very
slow growing function of the number of points $n$ in the database),  
yielding a rather slow convergence to the optimal
$1/c^2$ exponent. \\

\noindent \textbf{Lattice based LSH.} Motivated by the above-mentioned drawbacks,
Andoni and Indyk \cite{conf/focs/AndoniIndyk06} proposed a simpler and more
practical LSH scheme based on \emph{Euclidean Lattices}. A $k$-dimensional
lattice $L \subset \R^k$ given by a collection of basis vectors
$B=(b_1,\dots,b_k)$ is defined to be all integer linear combinations of
$b_1,\ldots, b_k$. The \emph{determinant} of $L$ is defined as $|\det(B)|$,
which we note is invariant to the choice of basis. In lattice based LSH, one
simply replaces the $k^{O(k)}$ shifts of $\Z^k$ by a single random shift $t\in
\R^k$ of a lattice $L$, and the hash value on query $q$ now becomes the closest
vector to $Gq$ in $L+t$. 

We note that the last step of the hashing algorithm
corresponds to solving the \emph{closest vector problem} (CVP) on $L$, i.e. given a
target point $q$ one must compute a closest vector to $x$ in $L$ under the
$\ell_2$ norm.  While this problem is NP-Hard in the worst
case~\cite{jour/mor/Kannan87}, in analogy to coding, one has complete freedom
to \emph{design the lattice}. Thus the main potential benefit of lattice based
LSH is that one may hope to find ``LSH-good'' lattices (i.e., lattices with good LSH exponent) for which CVP can be
solved quickly (at least much faster than enumerating over a ball partition). A
secondary benefit is that the corresponding hash functions require very little
storage compared to the ball partitions, namely just a single shift vector $t$
together with the projection matrix $G$ are sufficient (note that the lattice is
shared across all instantiations of the hash function). To evaluate lattice based 
LSH, Andoni and Indyk ~\cite{conf/focs/AndoniIndyk06} provided experimental results for $L$ being
the 24 dimensional Leech lattice equipped with the decoder
of~\cite{jour/ieeetoc/AB96}. A version of this scheme with the $8$ dimensional
E8 lattice has also been implemented and tested in~\cite{conf/icassp/JASG08},
and a parallelized GPU implementation of the Leech lattice scheme was tested
in~\cite{conf/ieeespdp/ACW13}. 

The following natural question was left open in the work of Andoni and Indyk:
can the space partitions induced by lattices achieve the optimal LSH constant
for the $\ell_2$-norm? Note that for a lattice $L$, the associated space
partition corresponds to a random shift of the tiling of space $\set{y +
\mathcal{V}_L: y \in L}$, where $\mathcal{V}_L$ is the
\emph{Voronoi cell} of the lattice, i.e.~the set of all points closer to the
origin than to any other lattice point.\\

\noindent{{\bf Our Contribution.}} As our main result, we resolve this question
in the affirmative. We show that for any fixed approximation factor $c > 1$,
there exists a sequence of lattices $\set{L_{k,c} \subset \R^k: k \geq 1}$,
where $L_{k,c}$ has an associated LSH exponent for $\ell_2$-norm bounded by $1/c^2
+O(1/\sqrt{k})$. We note that this is slightly better than the rate of
convergence to optimality proven by Andoni and Indyk in~\cite{conf/focs/AndoniIndyk06} for the ball partitioning approach. To prove this result, we rely on the probabilistic
method, using a delicate averaging argument over the space of all lattices of
determinant $1$. 

Our result is currently non-constructive, as we lack the appropriate
concentration results for the LSH collision probabilities, though we believe
this should be achievable. A simple and efficient sampling algorithm for the
random lattice distribution that we employ -- known as the Siegel measure over
lattices -- was given by Ajtai~\cite{conf/focs/Ajtai02}, and we expect that a
lattice sampled from this distribution should be ``LSH-good'' (in terms of the
LSH exponent) with high probability. Perhaps a more significant issue is that
for the same dimension $k$, the probabilistic argument may produce different
lattices for different approximation factors. Resolving this issue would require
a much finer understanding of the shape of the collision probability curve
(currently, we can only control the curve at two points), and we leave this as
as an open problem. We note however, that if one allows for sampling a different
random lattice for each hash function instantiation, as opposed to a single
lattice shared by all instantiations, then our methods are indeed constructive.
We find this approach somewhat less appealing however, since in general the cost
of preprocessing a lattice in the context of CVP, say computing a short basis,
the Voronoi cell, etc., is substantial, and hence it is desirable to only have
to perform such preprocessing once. Furthermore, since the end goal is
eventually to find a class of LSH good lattices with fast decoding algorithms,
our main contribution here is to show that LSH good lattices do in fact exist.

From the perspective of the complexity of
ANN, LSH-good lattices (when given as advice to an ANN algorithm) provide a
slight improvement over~\cite{conf/focs/AndoniIndyk06} when using any of the
recent $2^{k+o(k)}$-time and $2^{k+o(k)}$-space algorithms for the closest vector
problem~\cite{conf/soda/BD15,conf/focs/ADS15} to implement the hash queries. In
particular, for $(c,r)$-ANN on an $n$ element database in $\R^d$, by choosing
the dimension of the lattice to be $k = \log^{2/3}(n)$, we get query time
$dn^{\rho}$ using $dn + n^{1+\rho}$ space where $\rho = 1/c^2 +
O(1/\log^{1/3}(n))$. These complexity results for ANN are however superseded by
the more recent approaches using \emph{data dependent}
LSH~\cite{conf/soda/AINR14,conf/stoc/AR15}, which achieve $\rho =
1/(2c^2-1)+o(1)$. While more sophisticated, these approaches still depend on
rather impractical and expensive random space partitions -- with query
complexity $2^{O(\sqrt{d})}$ instead of $2^d$ -- and hence there is still room
for progress.  

Given this, we view our contribution mainly as a conceptual one, namely that
\emph{structured space partitions} can be optimal. We hope that this provides
additional motivation for developing space partitions which admit fast query
algorithms, and in particular for finding novel classes of ``spherical'' lattices
(LSH-good or otherwise) admitting fast CVP solvers. We note that up to present,
the only known general classes of lattices for which CVP is solvable in
polynomial time are lattices of Voronoi's first kind
(VFK)~\cite{jour/siamdm/MGC14} and tensor products of two root
lattices~\cite{jour/decocrypt/DW17}, whose geometry is still rather restrictive
(see~\cite{thesis/Vallentin03} section 2.3 for an exposition of VFK lattices).\\

\subsection{{\bf Techniques and High Level Proof Plan}}
The main techniques we use come from the theory of random lattices in the
geometry of numbers. While getting precise estimates on an LSH collision
probability for a generic high dimensional lattice seems very
difficult, it turns out to be much easier to estimate the average collision
probability for \emph{random lattices}. The distribution on lattices we use is
known as the Siegel measure on lattices, which is an invariant probability
measure on the space of lattices of determinant $1$ whose existence was
established by Siegel~\cite{jour/aom/Siegel45} (the invariance is with respect
to linear transformations of determinant $1$).


A powerful point of leverage when using random lattices drawn from the Siegel
distribution is that one can compute expected lattice point counts using
volumes. In particular, for any Borel set $S \subseteq \R^k$, we have the useful
identity $\E_L[|(L \cap S) \setminus \{0\}|] = {\rm vol}(S)$, i.e.~the expected
number of non-zero lattice points in $S$ is equal to its volume. We will need
more refined tools than this however, and in particular, we shall rely heavily
on powerful probabilistic estimates of Schmidt~\cite{jour/pams/Schmidt58} and
Rogers~\cite{jour/plms/Rogers58} developed for the Siegel measure. More
specifically, Schmidt~\cite{jour/pams/Schmidt58} provides extremely precise
estimates on the probability that a Borel set of small volume does not intersect a
random lattice, while Rogers~\cite{jour/plms/Rogers58} gives similarly precise
estimates for the relative fraction of cosets of a random lattice not
intersecting a Borel set. 

Using these estimates, we quickly derive clean and tight integral expressions
for the average collision probabilities. From then on, the strategy is simple if
rather tedious, namely, to get precise enough estimates for these integrals to
be able to show that the average ``near'' collision probability to the power
$c^2+o(1)$ is larger than average ``far'' collision probability. With this
inequality in hand, we immediately deduce the existence of an LSH-good lattice
from the probabilitic method. To prove that a random lattice is in fact LSH-good
with high probability (making our proof constructive) it would suffice to show
concentration for the relevant collision probabilities. While this seems very
plausible, we leave it for future work. \\

\noindent{{\bf Estimating the Collision Probabilities and the LSH Constant.}}
We now give a more detailed geometric explanation of what the collision
probabilities represent, how the computations for lattices differ from those for
a random ball partition, and how the random lattice estimates mentioned above
come into play.

We recall the lattice LSH family 
going from $\R^d$ to $\R^k$ induced
by a lattice $L \subset \R^k$. We shall assume here that $L$ has determinant
$1$ and hence that the Voronoi cell $\mathcal{V}_L$ of $L$ has volume $1$ (any
region that tiles space with respect to $L$ has the same volume). A function from the
hash family 
$\mathcal{H}$ is generated as follows. First, pick a uniform random coset $t
\leftarrow \R^k / L$ and a matrix $M \in \R^{k \times d}$ with i.i.d. $N(0,1/k)$
entries (i.e.~Gaussian with mean $0$ and variance $1/k$). On query $q$, we
define the hash value as $CV_L(Mq+t)$, namely the closest vector in $L$ to
$Mq+t$. Note that $M$ is normalized here to approximately preserve distances,
since $\E[\|Mq\|^2] = \|q\|^2$. For $x,y \in \R^d$, $\|x-y\|_2 = \Delta$, we
wish to estimate the collision probability 
\begin{equation}
\label{eq:col-prob}
p_{\Delta} := \Pr_{h \leftarrow \mathcal{H}}[h(x) = h(y)] = \Pr_{M,
t}[CV_L(t+Mx)=CV_L(t+My)] \text{ ,}
\end{equation}
where $M,t$ are as above. We will show shortly that the right hand side indeed
only depends on $\Delta$. Using the above hash family, showing that $L$ achieves
the optimal LSH exponent for an approximation factor $c > 0$ corresponds to showing 
\begin{equation}
\label{eq:opt-lsh-const}
\min_{\Delta > 0} \ln(1/p_\Delta)/\ln(1/p_{c\Delta}) \leq 1/c^2+o(1) \text{ .}
\end{equation}
Note that for any desired distance threshold $r > 0$, we can always scale the
database so that the scaled distance threshold becomes the minimizer above.
Clearly, to be able to get a good upper bound on the LHS of \eqref{eq:opt-lsh-const}, we have to
be able to derive tight estimates for the collision probability curve
$p_\Delta$ over a reasonably large range. 

To understand $p_\Delta$, we now show that the collision probability can be
expressed as the probability that a uniformly sampled point in $\mathcal{V}_L$
stays inside $\mathcal{V}_L$ after a Gaussian perturbation of size $\Delta$. Let
$x,y,M,t$ be as in~\eqref{eq:col-prob}. A first easy observation is that
conditioned on any realization of $M(y-x)$, the distribution of $Mx+t$ is still
uniform over cosets of $\R^n / L$ since $t$ is uniform. Therefore,
\begin{align*}
\Pr_{M,t}[CV_L(t+Mx)=CV_L(t+My)] 
&= \Pr_{M,t}[CV_L(t)=CV_L(t+M(y-x))] \\
&= \Pr_{t,g \leftarrow N(0,I_k/k)}[CV_L(t)=CV_L(t+\Delta g)] \\ 
&\quad \left(\text{ since $M(y-x)$ has distribution $N(0,\Delta^2I_k/k)$ } \right) \\
&= \Pr_{v \leftarrow \mathcal{V}_L, g \leftarrow N(0,I_k/k)}[v+\Delta g \in
\mathcal{V}_L] \text{ .}
\end{align*}
For the last equality, note first that the Voronoi cell contains exactly one
element from every coset of $\R^k/L$ and hence a uniformly chosen point $v$ from
$\mathcal{V}_L$ is also uniform over cosets. Lastly, by construction $CV_L(v) =
0$ and hence $CV_L(v) = CV_L(v+\Delta g) \Leftrightarrow v+\Delta g \in
\mathcal{V}_L$.

At this point, without any extra information about $\mathcal{V}_L$, the task of
bounding the delicate function of collision probabilities seems daunting if
not intractable (note that generically $\mathcal{V}_L$ is a polytope with
$2(2^k-1)$ facets). To compare with the ball partitioning approach, it is not
hard to show that up to a factor $2$, the collision probabilities there are in
correspondance with the quantities 
\[
q_\Delta := \Pr_{u \leftarrow r_k B_2^k,g \leftarrow N(0,I_k/k)}[u+\Delta g \in r_k B_2^k], 
\]
where $r_k \approx \sqrt{k/(2\pi e)}$ is the radius of a ball of volume $1$ in
$\R^k$. We use the volume $1$ ball here to make the correspondance to
$\mathcal{V}_L$ which also has volume $1$. Thus, to match the collision
probabilities of the ball, which we know yield the right exponent, one would
like $\mathcal{V}_L$ to ``look like'' a ball. Unfortunately, even seemingly strong
notions of sphericality, such as assuming that $\mathcal{V}_L$ is within a factor
$2$ scaling of a ball (which random lattices in fact satisfy,
see~\cite{journ/tit/ELZ05} for an exposition), do not seem to suffice to
estimate these delicate collision probabilities at the right ranges. Note that
to make the effects of the inevitable estimation errors and dimensionality
effects small in the minimization of~\eqref{eq:opt-lsh-const}, we will want both
$p_\Delta$ and $p_{c\Delta}$ to be quite small when we estimate the ratio of
their logarithms.  For the ball, the function $q_\Delta$ has the form
$e^{-\alpha \Delta^2}$, where $\alpha := \alpha(\Delta)$ varies slowly within a
constant range for $\Delta = O(\sqrt{k})$. Note that if $\alpha$ were in fact
constant, then $\ln(1/q_\Delta)/\ln(1/q_{c\Delta})$ would equal $1/c^2$ for
every $\Delta$. The region where $\alpha$ is the most stable turns out to be
around $\Delta = k^{1/4}$, where $q_\Delta$ is quite small, i.e.~around $e^{-\Omega(\sqrt{k})}$.

Fortunately, while computing precise estimates for a fixed $L$ is hard,
computing the average collision probability over the Siegel measure on the space
of lattices of determinant $1$ is much easier. Note that the expected collision
probability curve $\E_L[p_\Delta]$, where $L$ is chosen from the Siegel measure,
corresponds exactly to the collision probability curve associated with a slight
modification of the LSH family examined above, namely, where instead of using a
fixed lattice, we simply sample a new lattice $L$ from the Siegel measure for
each hash function instantiation.  We now argue that to show existence of a good
LSH lattice one can simply replace the collision probability curve above
$p_\Delta$ by the expected collision probability curve $\E_L[p_\Delta]$.  To see
this, assume that~\eqref{eq:opt-lsh-const} holds for the expected curve. By
rearranging, this implies that that there exists $\Delta
> 0$ such that $\E_L[p_{\Delta}]^{c^2-o(1)} \geq \E_L[p_{c\Delta}]$. Since
$c^2-o(1) \geq 1$, by Jensen's inequality
\begin{equation}
\label{eq:lsh-ineq}
\E_L[p_{\Delta}^{c^2-o(1)}] \geq \E_L[p_{\Delta}]^{c^2-o(1)} \geq \E_L[p_{c\Delta}] \text{ .}
\end{equation}
Thus, by the probabilistic method, there must exist a lattice $L'$ such that
$p_{\Delta}^{c^2-o(1)} \geq p_{c\Delta}$ holds for $L'$, which shows that $L'$
achieves an LSH constant of $1/c^2+o(1)$, as needed.

We now explain how one can compute the expected collision probabilities using the
estimates of Schmidt and Rogers. For a fixed $\Delta$, a direct computation reveals
\begin{equation}
\label{eq:exp-col-prob}
\begin{split}
\E_L[p_\Delta] 
 &= \E_{L,u \leftarrow \mathcal{V}_L, g \leftarrow N(0,I_k/k)} [u+\Delta g \in
\mathcal{V}_L] \\
 &= \E_{L,g \leftarrow N(0,I_k/k)}\left[\int_{\R^n} I[u \in \mathcal{V}_L,
u+\Delta g \in \mathcal{V}_L] {\rm du}\right] \quad \left(\text{ since
$\mathcal{V}_L$ has
volume $1$ }\right) \\
 &= \int_{\R^n} \Pr_{L,g\leftarrow N(0,I_k/k)}[u \in \mathcal{V}_L,
u+\Delta g \in \mathcal{V}_L] {\rm du} \text{ .} 
\end{split}
\end{equation}
Define $B_x$ for $x \in \R^k$ to be the open ball around $x$ of radius
$\|x\|$. Note that for a fixed $g$ and $u$, the event that both $u$ and $\Delta
g+u$ are in $\mathcal{V}_L$, can be directly expressed as $(B_u \cup B_{\Delta g+u}) \cap L =
\emptyset$, i.e.~that there is no lattice point closer to $u$ and $\Delta g+u$ than $0$. 
Thus, one can express~\eqref{eq:exp-col-prob} as
\begin{equation}
\label{eq:col-prob-2}
\int_{\R^n} \Pr_{L,g\leftarrow N(0,I_k/k)}[(B_u \cup B_{\Delta g + u}) \cap L = \emptyset] {\rm du} \text{ .} 
\end{equation}
From here, for fixed $g$ and $u$, the inner expression is exactly the
probability that a random lattice $L$ doesn't intersect a Borel set and hence we
may apply Schmidt's estimates. Here Schmidt shows that as long as the $B_u \cup
B_{\Delta g + u}$ has volume less than $k-1$, then under a mild technical assumption, we
can estimate 
\[
\Pr_L[(B_u \cup B_{\Delta g + u}) \cap L = \emptyset] \approx e^{-V_{u,\Delta g}}
\]
where $V_{u,\Delta g}$ is the volume of $B_u \cup B_{\Delta g + u}$. This estimate is
only useful when $u$ has norm roughly $r_k$, since otherwise the volume of $B_u$
is too large to usefully apply Schmidt's estimate. However, one would expect
that for large $u$, the probability that $u$ is in the Voronoi cell is already
quite small. This is formalized by Roger's estimate, which gives that the fraction
of cosets of $L$ that are not covered by the ball of volume $k$ around the
origin (i.e.~again radius roughly $r_k$) is approximately $e^{-k}$. In
particular, this implies that at most an $e^{-k}$ expected fraction of the
Voronoi cell (since points in the Voronoi cell are in one to one correspondance
with cosets) lies outside a ball of radius $\approx r_k$, and hence we can
truncate the integral expression~\eqref{eq:col-prob-2} at roughly this radius
without losing much. 

After these reductions, we get that the collision probabilities can be tightly
approximated by the following explicit integral:
\begin{equation}
\label{eq:col-prob-3}
\int_{\R^n} \E_g[e^{-V_{u,\Delta g}}] {\rm du} \text{.} 
\end{equation}

The proof now continues with an unfortunately very long and tedious calculation,
which shows that the above estimate closely matches the corresponding collision
probability $q_\Delta$ for the ball, thus yielding the desired LSH constant.

\subsection{Related Work}
As mentioned earlier, the works~\cite{conf/soda/AINR14,conf/stoc/AR15}
show how to use a data dependent version of LSH to give an improved ANN
exponent of $1/(2c^2-1)$, which was shown to be optimal under an appropriate
formalization of data dependence in~\cite{arxiv/AR15}. These works reduce ANN in
$\ell_2$ to ANN on the sphere via a recursive clustering approach, where the
base case of the recursion roughly corresponds to the clustered vectors being
embedded as nearly orthogonal vectors on the sphere. A generic reduction from
$\ell_2$ ANN to spherical ANN (without the exact base case guarantee as above) was
also given in earlier work of Valiant~\cite{conf/focs/Valiant12}. We note that
the above clustering style reductions to the sphere remain relatively 
impractical, and thus there still seems to be room for more direct and practical
$\ell_2$ methods. For a different vein, the
works~\cite{conf/soda/BDGL16,conf/soda/Christiani17,conf/soda/ALRW17} studied
the achievable tradeoffs between query time and space usage, where the optimal
tradeoff for hashing based approaches was achieved in~\cite{conf/soda/ALRW17}.

With respect to structured and practical LSH hash functions,
\cite{conf/nips/AILRS15} computed the collision probabilities for cross-polytope LSH on
the sphere (first introduced by~\cite{jour/ads/TT07,conf/sigkdd/ER08}), which
corresponds to a Voronoi partition on the sphere induced by a vertices of a
randomly rotated cross-polytope. As their main result, they show that when near vs
far corresponds to $\ell_2$ distance $\sqrt{2}/c$ vs $\sqrt{2}$ (the latter case
correspondings to orthogonal vectors), cross polytope LSH achieves the optimal
limiting exponent of $1/(2c^2-1)$, corresponding to the base case of the
recursive clustering approaches above. Furthermore, they show a fine grained
lower bound on the LSH exponent (when the far case again corresponds to
orthogonal vectors) of any hash function which partitions the sphere into at
most $T$ parts\footnote{Under the mild technical assumption that each piece
covers at most $1/2$ the sphere.}, which allows them to conclude that any
spherical LSH function that substantially improves upon cross polytope LSH needs
to have query time \emph{sublinear} in the number of parts. It is tempting here
to seek an analogy with lattice based LSH, in that the complexity of CVP
computations on a $d$-dimensional lattice $L$, after appropriate preprocessing,
can be bounded by $\tilde{O}(d^{O(1)} |\mathcal{V}_L|)$~\cite{conf/soda/BD15}
where $|\mathcal{V}_L|$ denotes the number of facets of the Voronoi cell of $L$.
Thus, one may wonder if $|\mathcal{V}_L|$ can be associated with the number of
``parts'' in an analogous manner.  For a generic $d$-dimensional lattice, we
note that $|\mathcal{V}_L| = 2(2^d-1)$, and thus the corresponding question would
be to find an LSH-good lattice for which CVP takes $\tilde{O}(2^{(1-\eps)d})$
for some positive $\eps > 0$. As another interesting comparison, the
$d$-dimensional cross polytope induces a partion with $2d$ parts whose gap to
optimality (in terms of the spherical LSH exponent) is $O(\log\log d/\log d)$,
whereas a random $d$-dimensional lattice has a Voronoi cell is $2(2^d-1)$ facets
with a gap to optimality (for $\ell_2$ LSH) of $O(1/\sqrt{d})$.  \\

\subsection{{\bf Conclusions and Open Problems}}
To summarize, for a fixed approximation factor $c > 1$, we show that random
space partitions induced by \emph{shifts of a single lattice} can achieve the
optimal \emph{data oblivious} LSH exponent for the $\ell_2$ metric. While this
shows that we can hope for ``well-structured'' space partitions for $\ell_2$,
the lattices we use to show existence are \emph{random}, and are in many ways
devoid of easy to exploit structure (at least algorithmically). Thus, a natural
open question is whether one can find a more structured family of lattices
achieving the same limiting LSH exponent for which CVP queries can be executed
faster. In terms of improving the present result, another natural question would
be to make our proof constructive and to show that for a fixed dimension $k$,
there exists a single $k$-dimensional lattice which achieves the optimal LSH
exponent for every $c \geq 1$. \\

\noindent{{\bf Organization.}} 
In Section~\ref{sec:prelims}, we setup notations and define formally the notion of lattices and approximate nearest neighbor search problem. 
We describe our lattice based hash function family in Section~\ref{sec:hash_function} and analyze its performance. The helper theorems needed to show the main result are proved in subsequent sections.

\section{Preliminaries}\label{sec:prelims}
\newcommand{\cV}{{\cal V}}

We denote the set $\{1, 2, \ldots, n\}$ by $[n]$. We work over the Euclidean space. For $x\in \R^d$, let  $||x||=\sqrt{\sum_i x_i^2}$ denote the $\ell_2$ norm of $x$.
Let  $V_B$ denote the volume of a $k$-dimensional unit-radius ball.  Let $\tau=\sqrt{k}\cdot  V_B^{\frac 1k}$. By standard geometry facts, $\tau = \sqrt{2 \pi e} \left( 1 + O\left( \frac 1k \right) \right)$. For $x\in \R^k$,  let $B_x$ denote the open ball centered at $x$ of radius $\|x\|$ and let $V_x$ denote its volume. Note that $V_x=V_B\|x\|^k$.

\noindent\textbf{Lattices.} A lattice $L\subset \R^d$ is the set of all linear combinations with integer coefficients of a set of linearly independent vectors $\{b_1, b_2, \ldots, b_r\}$, i.e., $L=\{\sum_i \alpha_i b_i\mid \alpha_i\in\Z ~\forall i\in [r]\}$. The lattice may be represented by the $d\times r$ basis matrix $B$, whose columns are the vectors $b_i$. 
 If the {\em rank} $r$ is exactly equal to $d$, then the lattice is said to have {\em full rank}. It is common to assume that the lattice has full rank, and we do so in what follows, since otherwise one may just work over the real span of $B$. 
 
 The {\em quotient group} $\R^d/L$ of $L$ is the set of cosets $c+L=\{c+v\mid v\in L\}$, where $c\in \R^d$, with the group operation $(c_1+L)+(c_2+L)=(c_1+c_2)+L$. 
The {\em determinant} of $L$, denoted $\det(L)$, is defined as $\det(L)=\sqrt{B^T B}$.  
A lattice has multiple bases: if $B$ is a basis then $B U$ is also a basis, for any unimodular matrix $U$ (i.e., a matrix $U$ with integer entries with $\det(U)=1$.)
The {\em Voronoi cell} of a lattice is the set of all points closer to the origin than to any other lattice point. Formally, 
${\cV}_L:=\{x \in \R^d\mid ||x||\leq||x-v||, \forall v\in L\}.$ 
Define the {\em shifted} Voronoi cell centered at $v$, denoted $\cV_{L}(v)$, to be the set of points $v+{\cV}_L=\{v+u\mid u\in {\cV}_L \}$. It is a standard fact that the set of cells $\{v+{{\cV}_L}\}_{v\in L}$ cover the entire space $\R^d$. Moreover, for every $x\in \R^d$, there exists a $v\in L$ such that $x-v\in {\cV}_L$. 
In fact, 
the (half-open) Voronoi cell 
contains exactly one representative from each coset $c+L$, for $c\in \R^d$.
One of the fundamental computational problem on lattices is the {\em Closest Vector Problem} (CVP) defined as follows: given a target vector $t\in \R^d$, find a closest vector from the lattice $L$ to $t$. We will denote a solution to CVP with input $t$ by $CV_L(t)$.
  We will use recent algorithms running in time $O(2^d)$ as a blackbox \cite{conf/focs/ADS15}. 
We will need the following property of the Voronoi cell.

\begin{fact}\label{fact:v-t}
$v \in CV_L(t)$ if and only if $t-v\in {\cV}_L$.
\end{fact}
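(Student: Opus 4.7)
The plan is to prove the equivalence by unfolding the two definitions and invoking the translation-invariance of the lattice as an additive group. Concretely, I would first write out that $v \in CV_L(t)$ means $v \in L$ together with $\|t - v\| \leq \|t - w\|$ for every $w \in L$; and that $t - v \in \cV_L$ means $\|t - v\| \leq \|(t-v) - u\|$ for every $u \in L$ (using the definition of $\cV_L$ given just above the statement).

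The key step is the change of variables $u = w - v$, which is a bijection from $L$ to itself because $v \in L$ and $L$ is closed under subtraction. Under this substitution $\|(t-v) - u\| = \|t - (v + u)\| = \|t - w\|$, so the universal quantifier over $u \in L$ in the Voronoi-cell condition expresses exactly the same family of inequalities as the universal quantifier over $w \in L$ in the closest-vector condition. This yields the equivalence in both directions.

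There is essentially no obstacle here: the fact is a restatement of the property, already mentioned in the preliminaries, that the Voronoi cell contains exactly one representative from each coset of $\R^d/L$. The only minor subtlety is that $CV_L(t)$ may be multi-valued when $t$ has multiple equidistant lattice neighbors, but the argument above is insensitive to this, applying uniformly to any choice of $v$ among the closest vectors. In particular, one can also read the fact as saying that the (possibly multi-valued) map $t \mapsto t - CV_L(t)$ realizes the canonical projection $\R^d \to \R^d/L$ with image inside the closed Voronoi cell $\cV_L$.
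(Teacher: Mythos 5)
Your proof is correct. The paper offers no proof of this fact at all---it is stated as a standard property of the Voronoi cell---and your argument (unfold both definitions and apply the change of variables $u = w - v$, a bijection of $L$ onto itself since $v \in L$) is precisely the standard justification. The one point worth making explicit is that in the direction ``$t - v \in \mathcal{V}_L \Rightarrow v \in CV_L(t)$'' the hypothesis $v \in L$ must be read as implicit in the statement, since otherwise it fails (e.g.\ taking $v = t \notin L$ gives $t - v = 0 \in \mathcal{V}_L$ while $v$ is not a lattice vector); your substitution step is exactly where this hypothesis is used, so your write-up already locates the needed assumption correctly.
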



\noindent\textbf{Approximate Near Neighbor and LSH.} In the $c$-approximate near neighbor($c$-ANN) problem, given a collection ${\cal P}$ of $n$ points in $\R^d$, and  parameters $r, \delta>0$, the goal is to construct a data structure with the following property: on input a query point $q\in \R^d$, with probability $1-\delta$, if there exists $p\in {\cal P}$ with $||q-p||\leq r$, it outputs some point $p'\in {\cal P}$, with $||q-p'||\leq c\cdot r$. By a simple scaling of the coordinates, one may assume that $r=1$. Also, $\delta$ is assumed to be a constant, and the success probability can be amplified by building several instances of the data structure.

A family ${\cal H}$ is a {\em locality-sensitive hashing} scheme  with parameters $(1, c, p_1, p_2)$ if it satisfies the following properties: for any $p, q\in \R^d$
\begin{itemize}
\item if $||p-q||\leq 1$ then $\Pr_{\cal H}[h(q)=h(p)]\geq p_1$,
\item if $||p-q||\geq c$ then $\Pr_{\cal H}[h(q)=h(p)]\leq p_2$.
\end{itemize}

The initial work of \cite{conf/stoc/IM98} shows that an LSH scheme implies a data structure for $c$-ANN.

\begin{theorem}\cite{conf/stoc/IM98}
Given a LSH family ${\cal H}$ with parameters $(1, c, p_1, p_2)$, where each function in ${\cal H}$ can be evaluated in time $\tau$, let $\rho=\frac{\log(1/p_1)}{\log(1/p_2)}$. Then there exists a data structure for $c$-ANN with $O((d+\tau)n^\rho \log_{1/p_2} n)$ query time, using $O(dn+n^{1+\rho})$ amount of space.
\end{theorem}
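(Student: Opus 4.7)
The plan is to use the classical Indyk--Motwani construction: amplify the gap between $p_1$ and $p_2$ by concatenation, then build several independent hash tables and probe the query's bucket in each.

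First I would define an amplified hash family $\mathcal{G}$ whose members are tuples $g(x)=(h_1(x),\dots,h_k(x))$ for independent $h_i\leftarrow \mathcal{H}$. By independence, if $\|p-q\|\leq 1$ then $\Pr[g(p)=g(q)]\geq p_1^k$, and if $\|p-q\|\geq c$ then $\Pr[g(p)=g(q)]\leq p_2^k$. I would pick $k=\lceil\log_{1/p_2} n\rceil$, so that $p_2^k\leq 1/n$ and $p_1^k\geq n^{-\rho}$ (using $\rho=\log(1/p_1)/\log(1/p_2)$). I would then sample $L=\lceil n^\rho\rceil$ independent functions $g_1,\dots,g_L$ from $\mathcal{G}$ and, for each $i$, store the database $\mathcal{P}$ in a hash table $T_i$ keyed by $g_i$ (using standard hashing so that each $T_i$ occupies $O(n)$ words beyond the common point storage).

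For a query $q$, the algorithm computes $g_1(q),\dots,g_L(q)$, scans the points stored in those $L$ buckets in order, and for each scanned $p'$ checks whether $\|q-p'\|\leq c$; it returns the first such $p'$. To keep the query time under control I would impose a cut-off: stop after inspecting more than $3L$ points total. Correctness has two parts. (i) If there is a near neighbor $p^*$ of $q$ with $\|p^*-q\|\leq 1$, then for each $i$, $\Pr[g_i(p^*)=g_i(q)]\geq p_1^k\geq 1/n^\rho\geq 1/L$, so the probability that some $g_i$ hits $p^*$ is at least $1-(1-1/L)^L\geq 1-1/e$. (ii) The expected number of ``far'' points (those at distance $>c$) colliding with $q$ in a single table is at most $n\cdot p_2^k\leq 1$, so the total over $L$ tables is at most $L$; by Markov's inequality, the cut-off of $3L$ is exceeded only with probability at most $1/3$. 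A union bound over these two events shows that the algorithm succeeds with constant probability, which can be boosted by repetition into the claimed $1-\delta$ guarantee.

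For the resource bounds I would just tally: space is $O(dn)$ to store $\mathcal{P}$ plus $O(nL)=O(n^{1+\rho})$ for the tables, giving $O(dn+n^{1+\rho})$. For query time, evaluating the $L$ composite hash functions costs $O(L\cdot k\cdot \tau)=O(\tau n^\rho \log_{1/p_2} n)$, and inspecting at most $3L$ candidate points costs $O(d L)=O(d n^\rho)$ for distance checks, for a total of $O((d+\tau)n^\rho \log_{1/p_2} n)$. The only step that requires any care rather than routine accounting is choosing $k$ and $L$ so that both the ``miss the near neighbor'' probability and the ``too many far collisions'' probability are simultaneously bounded below $1$; this is precisely what the Markov + complementary-event argument above accomplishes, and the stated $\rho$ is essentially forced by the requirement $p_2^k\approx 1/n$.
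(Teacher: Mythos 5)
Your argument is the standard Indyk--Motwani reduction (gap amplification with $k=\lceil\log_{1/p_2}n\rceil$, $L\approx n^{\rho}$ tables, a $3L$ inspection cut-off, and a Markov-plus-complement union bound), which is exactly the proof of the cited theorem; the paper itself states this result as a citation to \cite{conf/stoc/IM98} and gives no proof of its own. Your reconstruction is correct, up to the routine constant-factor slack from the ceiling in $k$ (one has $p_1^{k}\geq p_1\cdot n^{-\rho}$, so the per-table hit probability is $\Omega(1/L)$ rather than exactly $1/L$), which does not affect the stated bounds.
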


\noindent \textbf{Multidimensional Gaussian.}
A $d$-dimensional Gaussian distribution with mean $0$ and covariance matrix $\sigma^2 I_d \in \R^{d\times d}$ 
has density function $$p(x)=\frac{1}{(2\pi)^{d/2} \sigma^d} \exp(-\frac{||x||^2}{2\sigma^2}),$$ and is denoted by $N(0,\sigma^2 I_d)$.

\section{Our Lattice-based Hash Family and Proof Strategy}\label{sec:hash_function}
\noindent\textbf{LSH family for lattice $L$ with $\det(L)=1$.} A hash function $h=h_{M, t}$  indexed by a projection matrix $M \in \R^{k \times d}$  from $\R^d$ to $\R^k$, and a vector $t\in \R^k$ is constructed as follows:
\begin{enumerate}
\item  pick the entries $M_{i,j}$  according to a Gaussian distribution with mean $0$ and variance $1/k$. 
\item  pick $t$ uniformly from the Voronoi cell ${\cal{V}}_L$ of $L$ (centered at $0$). Sampling $t$ can be achieved by sampling from $\R^k/L$, namely by sampling from the fundamental parallelepiped with respect to any basis. 
\end{enumerate}

Given a point $a\in \R^d$, we define  $h(a)$ to be a closest vector in $L$ to its projection $ Ma$ translated by $t$. Formally, 
\[ h(a) = CV_L(Ma+t).\]

We first show that for $a, b\in \R^d$ the quantity  $\Pr_{M, t} [h(a) = h(b)]$ only depends on the distance $||a-b||$, and not on the points $a, b$ themselves.

\begin{prop}\label{prop:p_delta}
 Let $a, b \in \R^d$ be arbitrary and let $\Delta=||a-b||$. Then 
 \[ \Pr_{M, t }[h(a) = h(b) ]  = \Pr_{x\leftarrow \mathcal{V}_L, y\leftarrow N(0,\Delta^2I_k/k)}[ x+y \in {\cal V}_L]. \] 
\end{prop}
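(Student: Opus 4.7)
The plan is to mirror the derivation sketched in the introduction, but cast it in the notation of Section~\ref{sec:hash_function} and rely explicitly on Fact~\ref{fact:v-t}. The key reductions are (i) translation invariance of $CV_L$ modulo the lattice, (ii) the observation that uniformity of $t$ over $\mathcal{V}_L$ is preserved modulo $L$ after any deterministic shift, and (iii) the fact that $M(b-a)$ is an isotropic Gaussian whose distribution depends on $a,b$ only through $\Delta=\|a-b\|$.

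First I would introduce the random variable $u := (Ma + t) - CV_L(Ma+t)$. By Fact~\ref{fact:v-t}, $u \in \mathcal{V}_L$, and conditional on $M$, the coset $(Ma+t) + L$ is uniform over $\R^k/L$ because $t+L$ is uniform over $\R^k/L$ and adding the fixed vector $Ma$ is a bijection on $\R^k/L$. Since the Voronoi cell contains exactly one representative of each coset of $L$, this implies that $u$ is uniform over $\mathcal{V}_L$ and, crucially, independent of $M$. Next I would use the identity $CV_L(z + w) = CV_L(z) + w$ for any $w \in L$, which follows directly from Fact~\ref{fact:v-t}, to rewrite
\[
h(b) \;=\; CV_L\bigl(Mb + t\bigr) \;=\; CV_L\bigl(u + M(b-a) + CV_L(Ma+t)\bigr) \;=\; CV_L\bigl(u + M(b-a)\bigr) + h(a).
\]
Thus $h(a) = h(b)$ if and only if $CV_L(u + M(b-a)) = 0$, which by Fact~\ref{fact:v-t} is equivalent to $u + M(b-a) \in \mathcal{V}_L$.

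Finally I would identify the distribution of $M(b-a)$. Since the rows of $M$ are independent with i.i.d.\ $N(0,1/k)$ entries, for any fixed $w \in \R^d$ the vector $Mw$ is Gaussian with mean $0$ and covariance $\|w\|^2/k \cdot I_k$. Applied to $w = b-a$, this gives $M(b-a) \sim N(0, \Delta^2 I_k / k)$. Combining with the independence of $u$ and $M$ established above, the event $\{u + M(b-a) \in \mathcal{V}_L\}$ has probability equal to $\Pr_{x \leftarrow \mathcal{V}_L,\, y \leftarrow N(0,\Delta^2 I_k/k)}[x+y \in \mathcal{V}_L]$, which is the desired identity.

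The only step requiring any care is the independence argument in step two: one must be careful that although $Ma+t$ depends on $M$, the residue $u$ modulo $L$ does not, because the conditional distribution of $t + Ma \bmod L$ given $M$ is uniform for every realization of $M$. Everything else is a direct application of Fact~\ref{fact:v-t} and the rotational symmetry of the Gaussian projection, so I do not expect any genuine obstacle.
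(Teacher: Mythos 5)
Your proposal is correct and follows essentially the same route as the paper's proof: reduce to the Voronoi representative $x = Ma+t - CV_L(Ma+t)$, use the uniformity of the random shift $t$ over cosets of $L$ to conclude $x$ is uniform over $\mathcal{V}_L$ (independently of $M$), and identify $M(b-a)$ as $N(0,\Delta^2 I_k/k)$. Your treatment is if anything slightly more careful than the paper's, making explicit both the independence of the Voronoi residue from $M$ and the translation identity $CV_L(z+w)=CV_L(z)+w$ for $w\in L$, which the paper leaves implicit.
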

Let $p_{\Delta}$ denote the probability of collision of two inputs which are exactly distance $\Delta$ apart. i.e., $p_{\Delta} := \Pr_{M, t }[h(a) = h(b) ]$, where $||a-b||=\Delta$. An easy argument shows that $p_{\Delta}$ is non-increasing 
as a function of $\Delta$.

\begin{corollary}\label{cor:pdelta}
$p_{\Delta}$ is non-increasing as a function of $\Delta$. 
\end{corollary}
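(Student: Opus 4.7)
The plan is to reduce, via Proposition~\ref{prop:p_delta}, to a short pointwise monotonicity statement that only uses convexity of the Voronoi cell. Rewriting the Gaussian $y$ in Proposition~\ref{prop:p_delta} as $\Delta g$ with $g \sim N(0, I_k/k)$, we have
\[
p_\Delta \;=\; \Pr_{x \leftarrow \mathcal{V}_L,\, g \leftarrow N(0, I_k/k)}\bigl[x + \Delta g \in \mathcal{V}_L\bigr].
\]
So it suffices to show, for every fixed $x \in \mathcal{V}_L$ and every fixed $g \in \R^k$, that the indicator $\mathbf{1}[x + \Delta g \in \mathcal{V}_L]$ is non-increasing in $\Delta \geq 0$; integrating over $x$ and $g$ then preserves monotonicity and yields the corollary.

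For the pointwise claim, fix $x \in \mathcal{V}_L$ and $g \in \R^k$, and let $0 \leq \Delta_1 \leq \Delta_2$ with $x + \Delta_2 g \in \mathcal{V}_L$. The key identity is
\[
x + \Delta_1 g \;=\; \frac{\Delta_1}{\Delta_2}\bigl(x + \Delta_2 g\bigr) + \left(1 - \frac{\Delta_1}{\Delta_2}\right) x,
\]
which exhibits $x + \Delta_1 g$ as a convex combination of two points of $\mathcal{V}_L$ (namely $x + \Delta_2 g$ and $x$ itself). Since $\mathcal{V}_L$ is a convex set (it is an intersection of half-spaces, one per nonzero lattice vector), we conclude $x + \Delta_1 g \in \mathcal{V}_L$, which proves the indicator's monotonicity.

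There is no real obstacle: the argument uses only convexity of $\mathcal{V}_L$ and the hypothesis $x \in \mathcal{V}_L$ (so that $x$ itself can serve as one of the two endpoints in the convex combination). I expect this is precisely the ``easy argument'' referenced by the authors, and it is robust to replacing $\mathcal{V}_L$ by any convex body containing the origin.
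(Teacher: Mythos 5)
Your proof is correct, but it takes a genuinely different (and more elementary) route than the paper. The paper first integrates out $x$, writing $f(s)=\E_y\bigl[\operatorname{vol}\bigl(\mathcal{V}_L\cap(\mathcal{V}_L-sy)\bigr)\bigr]$, and then shows for each fixed $y$ that $g_y(s)=\operatorname{vol}\bigl(\mathcal{V}_L\cap(\mathcal{V}_L-sy)\bigr)$ is non-increasing on $\R_+$ by proving $g_y^{1/n}$ is concave via Brunn--Minkowski and symmetric via the central symmetry of $\mathcal{V}_L$. You instead prove a stronger, per-sample statement: for each fixed $x\in\mathcal{V}_L$ and direction $g$, the set $\{\Delta\geq 0: x+\Delta g\in\mathcal{V}_L\}$ is an interval containing $0$ (your convex-combination identity, valid since $\mathcal{V}_L$ is an intersection of half-spaces; the degenerate case $\Delta_2=0$ is trivial), so the indicator is pointwise non-increasing and the expectation over the $\Delta$-independent law of $(x,g)$ inherits the monotonicity. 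Your pointwise claim in fact implies the paper's statement that each $g_y$ is non-increasing on $\R_+$, and it needs neither Brunn--Minkowski nor the central symmetry of the Voronoi cell --- only convexity and the fact that the starting point $x$ is sampled from $\mathcal{V}_L$ itself (so even the hypothesis that the body contain the origin is unnecessary). What the paper's heavier argument buys is information you do not need here, namely unimodality of the overlap volume $s\mapsto g_y(s)$ over all of $\R$ with maximum at $0$; for the corollary as stated, your argument suffices and is simpler.
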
 

The performance of our LSH family is measured by the LSH constant defined by 
\[\rho_L:= \min_{\Delta > 0} \frac{\ln 1/p_{\Delta}}{\ln 1/p_{c\Delta}}. \]

Our result shows the existence of a lattice $L$ with optimal performance guarantee.
\begin{restatable}{theorem}{thmmainlsh}
\label{thm:main_lsh}
 For every $k$ large enough and $c > 1$, there exists a $k$-dimensional lattice $L$ with $\det(L)=1$  achieving 
$$\rho_L \leq \frac{1}{c^2} + O\left(\frac{1}{\sqrt{k}}\right). $$
\end{restatable}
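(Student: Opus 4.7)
The plan is to use the probabilistic method over the Siegel measure on lattices of determinant $1$. Rather than analyzing a specific lattice, I would study the averaged collision probability curve $\bar{p}_\Delta := \E_L[p_\Delta]$, where $L$ is drawn from the Siegel distribution. The key reduction is: if I can exhibit a distance $\Delta^* > 0$ for which $\bar{p}_{\Delta^*}^{c^2 + o(1)} \geq \bar{p}_{c\Delta^*}$, then since $c^2 + o(1) \geq 1$, Jensen's inequality applied to $x \mapsto x^{c^2 + o(1)}$ gives
\[
\E_L\bigl[p_{\Delta^*}^{c^2+o(1)}\bigr] \;\geq\; \bar{p}_{\Delta^*}^{c^2+o(1)} \;\geq\; \bar{p}_{c\Delta^*} \;=\; \E_L[p_{c\Delta^*}].
\]
Hence there must exist a lattice $L'$ with $p_{\Delta^*}^{c^2+o(1)} \geq p_{c\Delta^*}$, i.e.\ $\rho_{L'} \leq 1/c^2 + O(1/\sqrt{k})$, which is the desired conclusion.

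The next step is to turn $\bar{p}_\Delta$ into a tractable integral. Starting from Proposition~\ref{prop:p_delta}, Fubini gives
\[
\bar{p}_\Delta \;=\; \int_{\R^k} \Pr_{L,\, g\leftarrow N(0,I_k/k)}\!\bigl[u \in \cV_L,\; u + \Delta g \in \cV_L\bigr]\, du.
\]
The event $\{u \in \cV_L,\; u+\Delta g \in \cV_L\}$ is equivalent to $(B_u \cup B_{u+\Delta g}) \cap L = \emptyset$, where $B_x$ is the open ball of radius $\|x\|$ centered at $x$. For $\|u\|$ of moderate size, Schmidt's estimate allows replacement of the inner probability by $e^{-{\rm vol}(B_u \cup B_{u+\Delta g})}$, up to controlled error, provided this volume is bounded by $k-1$. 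For large $\|u\|$, Schmidt's estimate fails, but here Rogers' coset covering estimate kicks in: an overwhelming $1 - e^{-\Omega(k)}$ fraction of cosets of a random $L$ is covered by a ball of volume $\approx k$ around the origin, so the contribution to the integral from $\|u\| \gg r_k := \sqrt{k/(2\pi e)}$ is negligible. Together, these two tools yield the clean approximation $\bar{p}_\Delta \approx \int_{\|u\| \leq (1+o(1))r_k} \E_g\!\left[e^{-V_{u,\Delta g}}\right] du$, with $V_{u,\Delta g} := {\rm vol}(B_u \cup B_{u+\Delta g})$.

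With this integral in hand, the plan is to pick $\Delta^* \approx k^{1/4}$, mirroring the analogous regime in which the ball-partition curve $q_\Delta$ attains its sharpest exponent. In this window, both $\bar{p}_{\Delta^*}$ and $\bar{p}_{c\Delta^*}$ are of order $e^{-\Theta(\sqrt{k})}$ and are well approximated by $e^{-\alpha(\Delta) \Delta^2}$ for a slowly varying $\alpha$, which is exactly the property that yields an exponent of $1/c^2 + o(1)$. To evaluate the integral I would expand $V_{u,\Delta g}$ via the standard union-of-two-balls formula in terms of $\|u\|$, $\|u+\Delta g\|$, and the angle between them; use Gaussian concentration of $\|g\|^2$ around $1$ to pin down the displacement length; and use radial concentration of Lebesgue measure on $\{\|u\| \approx r_k\}$ to collapse the $u$-integral to its thin critical shell. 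A Laplace-type asymptotic then extracts the leading exponent of $\bar{p}_\Delta$.

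The main obstacle is the final quantitative step: verifying that the resulting ratio $\ln(1/\bar{p}_{\Delta^*}) / \ln(1/\bar{p}_{c\Delta^*})$ is at most $1/c^2 + O(1/\sqrt{k})$. The collision probabilities are doubly exponential in the sense that small multiplicative errors in $V_{u,\Delta g}$ translate into multiplicative errors of order $e^{\Theta(\sqrt{k})}$ in $\bar{p}_\Delta$, so I must track second-order terms in the expansion of $V_{u,\Delta g}$ and show that the error terms from Schmidt's approximation and from the Rogers truncation are genuinely lower order than the leading $e^{-\Theta(\sqrt{k})}$ main term. Carrying out this calculation carefully, with a $\Delta^*$ chosen to match the stable regime of $\alpha(\Delta)$, is where the bulk of the technical work lies, but no further conceptual ingredients should be needed beyond the Schmidt and Rogers estimates together with the probabilistic/Jensen reduction described above.
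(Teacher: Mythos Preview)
Your proposal is correct and follows essentially the same route as the paper: the Jensen-plus-probabilistic-method reduction on the Siegel average, the integral rewriting via Proposition~\ref{prop:p_delta}, the Schmidt/Rogers truncation to obtain the $e^{-V}$ approximation, and the choice $\Delta^* = k^{1/4}$ are exactly the paper's ingredients. The only item you do not mention explicitly is the separate handling of very large $c$ (roughly $c \gtrsim k^{1/4}$), where the upper bound on $\E_L[p_{c\Delta}]$ degenerates; the paper disposes of this case in one line by monotonicity of $p_\Delta$, reducing to a bounded $c'$.
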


Theorem \ref{thm:main_lsh} follows from our main technical result, which bounds the expected collision probabilities $p_{\Delta}$ and $p_{c\Delta}$ for $\Delta = k^{1/4}$.

\begin{restatable}{theorem}{thmmaincollisionpr}
\label{thm:main_collision_pr}
For every $k$ large enough and $c > 1$, 
there exist absolute constants $K_1, K_2, K_3$ such that for $\Delta = k^{1/4}$,  
\begin{align*}
&\E_L\left[p_{\Delta} \right] \geq K_1~e^{- \frac{\tau^2}{8} \sqrt{k}}
\qquad  \mbox{ and, } \\
&\E_L\left[ p_{c\Delta} \right] \leq K_2 ~e^{- \frac{\tau^2}{8} c^2 \sqrt{k} \left(1 - \frac{K_3 c^2}{\sqrt{k}} \right)},  
\end{align*}  where the expectation is over $k$-dimensional lattices $L$ with $\det(L)=1$.
\end{restatable}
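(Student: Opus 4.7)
The plan is to carry out the program sketched in the introduction: reduce the expected collision probability to an explicit geometric integral via the estimates of Schmidt and Rogers (which are established in later sections), and then evaluate it in the regime $\Delta = k^{1/4}$. By Proposition~\ref{prop:p_delta} together with the fact that $u \in \mathcal{V}_L \Leftrightarrow B_u \cap L = \emptyset$, swapping expectations and using $\mathrm{vol}(\mathcal{V}_L) = 1$ gives
\[
\E_L[p_\Delta] = \int_{\R^k} \E_g\bigl[\Pr_L\bigl((B_u \cup B_{u+\Delta g}) \cap L = \emptyset\bigr)\bigr] \, du.
\]
Schmidt's estimate then replaces the inner probability by $(1+o(1))\,e^{-V_{u,\Delta g}}$ whenever the union volume $V_{u,\Delta g}$ is at most $k-1$, while Rogers' estimate implies that the contribution to the integral from $\|u\| > R$, for $R = r_k(1+o(1))$ (here $r_k = V_B^{-1/k} = \sqrt{k}/\tau$ is the unit-volume ball radius), is at most $e^{-k}$ and hence negligible compared with the target $e^{-\tau^2\sqrt{k}/8}$. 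The task thus reduces to estimating $\int_{\|u\| \leq R} \E_g[e^{-V_{u,\Delta g}}]\, du$.

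To evaluate this integral I would exploit rotational symmetry, fixing $u = \|u\|e_1$ with $\|u\| \approx r_k$ and decomposing $g = g_1 e_1 + g_\perp$. A direct expansion gives $\|u+\Delta g\|^2 - \|u\|^2 = 2\|u\|\Delta g_1 + \Delta^2\|g\|^2$. Since $k/r_k^2 = \tau^2$, an increase of $\|u+\Delta g\|^2$ above $r_k^2$ by an amount $\sigma$ multiplies $V_{u+\Delta g} = (\|u+\Delta g\|/r_k)^k$ by approximately $e^{\sigma\tau^2/2}$. Conditional on the typical event $\|g\|^2 \approx 1$ (by standard $\chi^2$ concentration), keeping $V_{u+\Delta g} = O(1)$ (so that $V_{u,\Delta g} = O(1)$ since $V_u \leq 1$) amounts to requiring $2\|u\|\Delta g_1 \lesssim -\Delta^2 = -\sqrt{k}$. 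Since $g_1 \sim N(0,1/k)$, the scalar $2\|u\|\Delta g_1$ has standard deviation $2 k^{1/4}/\tau$, so the threshold sits $\tau k^{1/4}/2$ standard deviations below the mean, which has Gaussian tail probability $\approx e^{-\tau^2\sqrt{k}/8}$. Replacing $\Delta$ by $c\Delta$ scales the threshold by $c^2$ and the standard deviation by $c$, leading to exponent $c^2\tau^2\sqrt{k}/8$ with a second-order correction of order $c^4/\sqrt{k}$.

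The lower bound on $\E_L[p_\Delta]$ follows by restricting the $(u,g)$-integration to this favorable regime, where the $u$-integral contributes a constant and the $g$-probability contributes $e^{-\tau^2\sqrt{k}/8}$. For the upper bound on $\E_L[p_{c\Delta}]$, one must instead dominate the full integral $\int \E_g[e^{-V_{u,c\Delta g}}] du$; this will be done by partitioning $(u,g)$-space into the typical regime above (controlled by the same Gaussian-tail calculation) and crude tail bounds covering atypical $\|g\|$ or $\|u\|$. The main technical obstacle will be matching the exponent constant $\tau^2/8$ with additive error only $O(c^2/\sqrt{k})$: one must carefully track (i) the $\chi^2$ fluctuations of $\|g\|^2$ around $1$, (ii) the intersection correction in $V_{u,\Delta g} = V_u + V_{u+\Delta g} - \mathrm{vol}(B_u \cap B_{u+\Delta g})$ (which helps the lower bound and must be shown to cost little on the upper bound), and (iii) the error terms in Schmidt's and Rogers' approximations. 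Once the dominant Gaussian-tail contribution is isolated, the remaining estimates are mechanical but tedious.
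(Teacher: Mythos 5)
Your overall plan is the same as the paper's: reduce $\E_L[p_\Delta]$ via Schmidt and Rogers to the explicit integral $\int \E_g[e^{-V_{u,\Delta g}}]\,du$ (this is Lemma~\ref{lem:bound_exp_coll}), then estimate that integral by a Gaussian-tail computation at $\|u\|\approx r_k$, $\Delta=k^{1/4}$ (this is Lemma~\ref{lem:main}), and your leading-order arithmetic ($\tau^2\sqrt{k}/8$ for $\Delta$, $c^2\tau^2\sqrt{k}/8$ for $c\Delta$) is correct. However, there is a genuine gap in the reduction step: you assert that Schmidt's estimate replaces $\Pr_L[(B_u\cup B_{u+\Delta g})\cap L=\emptyset]$ by $(1+o(1))e^{-V_{u,\Delta g}}$ whenever the union volume is at most $k-1$. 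Schmidt's Theorem~\ref{thm:schmidt} applies only to sets containing no antipodal pair, and the union $B_u\cup B_{u+\Delta g}$ can contain points $a$ with $-a$ also in the union; moreover the error term $R$ is only small when the volume is a small constant fraction of $k$ (for $V$ as large as $k/8$ the term $6(3/4)^{k/2}e^{4V}$ already diverges), not up to $k-1$. The paper handles this by splitting the symmetric part of $B_{x,y}$ and applying Schmidt to the asymmetric half $A$ with $V_{x,y}/2\le V_A\le V_{x,y}$ and threshold $k/32$ (Lemma~\ref{lem:schmidt_ap}), which only yields the upper bound $e^{-V_{x,y}/2}$; the correct constant $\tau^2/8$ in the upper bound on $\E_L[p_{c\Delta}]$ is then recovered by the change of variables $u=4^{-1/k}x$, $v=4^{-1/k}y$, which is why Lemma~\ref{lem:bound_exp_coll} carries the factor $4$ and the argument $4^{-2/k}\Delta^2$. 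Without an argument of this kind (or a separate proof that the symmetric part has negligible volume), your upper bound would degrade to exponent $\tau^2 c^2\sqrt{k}/16$ and the theorem would not follow.

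Secondly, the heart of the matter—turning the heuristic tail calculation into two-sided bounds on $I(\beta\sqrt{k})$ with multiplicative error $1\pm O(\beta/\sqrt{k})$ in the exponent (the content of Lemma~\ref{lem:main})—is only announced, not carried out: you correctly list the obstacles ($\chi^2$ fluctuations of $\|g\|^2$, the intersection volume $\mathrm{vol}(B_u\cap B_{u+\Delta g})$, the Schmidt/Rogers error terms, and the contribution of atypical $u,g$), but resolving them is precisely the delicate balancing the paper flags as the technically hardest part, and your sketch gives no quantitative control of them. A small additional slip: the second-order correction in the exponent for $c\Delta$ is $\Theta(c^4)$ in absolute terms (relative $\Theta(c^2/\sqrt{k})$), not $c^4/\sqrt{k}$; also note the theorem's upper bound is only meaningful for $c^2=O(\sqrt{k})$, a restriction your argument would also need to make explicit when invoking the integral estimate.
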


We can now prove Theorem~\ref{thm:main_lsh} using Corollary~\ref{cor:pdelta} and Theorem~\ref{thm:main_collision_pr}. 
\begin{proof}[Proof of Theorem~\ref{thm:main_lsh}]
For any $\Delta > 1$, define $\tilde{\rho} := \frac{\ln 1/\E_L[p_{\Delta}]}{\ln 1/\E_L[p_{c\Delta}]}$. From Corollary~\ref{cor:pdelta}, we know that $p_{\Delta}$ is non-increasing. Hence, $\tilde{\rho} \leq 1$ for any $c > 1$.  So, we can use Jensen's inequality to get that
\begin{align*}
\E_L\left[ p_{\Delta}^{1/\tilde{\rho}} \right]
&\geq \E_L\left[ p_{\Delta} \right]^{1/\tilde{\rho}}  \qquad \mbox{ (Jensen's inequality) }\\
&=  \E_L\left[ p_{c\Delta} \right]  \qquad \mbox{ (by the definition of
$\tilde{\rho}$) }.
\end{align*}

By the probabilistic method, it then follows that there exists a $k$-dimensional lattice $L$ with $\det(L)=1,$ such that the collision probabilities satisfiy $\frac{\ln 1/p_{\Delta}}{\ln 1/p_{c\Delta}} = \tilde{\rho}$ and hence, 
$\rho_L \leq \tilde{\rho}$. 

We now show that $\tilde{\rho} \leq \frac{1}{c^2} + O\left( \frac{1}{\sqrt{k}} \right)$.  
From Theorem~\ref{thm:main_collision_pr} we know that for any $c > 1$, 
and $\Delta = k^{\frac 14}$, there exist constants $K_1, K_2, K_3$ such that 
\begin{align*}
&\E_L\left[p_{\Delta} \right] \geq K_1~e^{- \frac{\tau^2}{8} \sqrt{k}}
\qquad  \mbox{ and, } \\
&\E_L\left[ p_{c\Delta} \right] \leq K_2 ~e^{- \frac{\tau^2}{8} c^2 \sqrt{k} \left(1 - \frac{K_3 c^2}{\sqrt{k}} \right)}
\end{align*}

Note that for $c > \frac{k^{\frac 14}}{\sqrt{K_3}}$, the upper bound on $\E_L\left[ p_{c\Delta} \right]$ from Theorem~\ref{thm:main_collision_pr} becomes trivial. 
First, we consider the case when $c  \leq  \frac{k^{\frac 14}}{2 \sqrt{K_3}}$. 
For this value of $c$, we can use bounds obtained in Theorem~\ref{thm:main_collision_pr} to show that $\tilde{\rho} \leq \frac{1}{c^2} + O\left(\frac{1}{\sqrt{k}}\right)$ as follows:
\begin{align*}
\frac{\ln 1/\E_L(p_{\Delta})}{\ln 1/\E_L(p_{c\Delta})} 
&\leq  \frac{\frac{\tau^2}{8} \sqrt{k} - \ln K_1 }{\frac{\tau^2}{8} c^2 \sqrt{k} \left(1 - \frac{K_3 c^2}{\sqrt{k}} \right)  - \ln K_2} \\
&\leq \frac{1}{c^2}\left( 1 + K_4 \frac{c^2}{\sqrt{k}}\right) \qquad \mbox{ for some constant $K_4$ }.
\end{align*}

Now, for $c > \frac{k^{\frac 14}}{2 \sqrt{K_3}}$, we need to show that there exists a $k$-dimensional lattice of determinant 1, such that $\rho_L \leq \frac{1}{c^2} + O\left( \frac{1}{\sqrt{k}} \right) $. From the monotonicity of $p_{\Delta}$, we know that for any $c' < c$, 
$p_{c \Delta} \leq p_{c' \Delta}$. Therefore, consider $c' = k^{\frac 14} /2 \sqrt{K_3}  < c$. From Theorem~\ref{thm:main_collision_pr}, and the analysis above, we know that there exists a lattice of determinant $1$ such that 
\begin{align*}
\rho_L &\leq \frac{1}{c'^2}\left( 1 + K_4 \frac{c'^2}{\sqrt{k}}\right) \qquad \mbox{ for some constant $K_4$ }\\
&= \frac{2K_3}{\sqrt{k}} + \frac{K_4}{\sqrt{k}} \\
&= \frac{1}{c^2} + O\left( \frac{1}{\sqrt{k}} \right).
\end{align*}
\end{proof}

Proving  Theorem~\ref{thm:main_collision_pr} poses substantial technical hurdles. We will break the proof into smaller components, which we describe after introducing some helpful notation.


For any $\Delta \geq 1 $, define 
\[ I(\Delta^2) := \int_{x \in \R^k: V_x \leq \frac k8}  \E_{y\leftarrow N(0, \Delta^2I_k/k)} \left[e^{ - V_x -V_{x+y}}\right] ~dx. \] 
In the next lemma, we show tight bounds on $\E_L[ p_{\Delta} ]$ in terms of 
$I(\Delta^2)$.

\begin{restatable}{lemma}{thmboundexpcoll}
\label{lem:bound_exp_coll} 
For every $k$ large enough and any $\Delta \geq 1$, 
\begin{align*}
I(\Delta^2)  - e^{-k/8} 
 ~\leq~ \E_L[ p_{\Delta} ]  
 ~\leq~ 4 I(4^{-\frac 2k} \Delta^2) + 3e^{-k/8}.
\end{align*}
where the expectation is over $k$-dimensional lattices $L$ with $\det(L)=1$.
\end{restatable}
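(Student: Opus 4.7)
The plan is to write $\E_L[p_\Delta]$ as an integral over $u \in \R^k$ of the probability that the random lattice avoids a pair of balls, split the integral by the volume of the first ball, apply Schmidt's estimate on the ``small'' piece and Rogers' estimate on the tail.

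Using Proposition~\ref{prop:p_delta} together with $\mathrm{vol}(\mathcal{V}_L) = 1$, I rewrite
\begin{align*}
\E_L[p_\Delta]
&= \E_{L,\,g \leftarrow N(0,I_k/k)}\!\int_{\R^k} \mathbf{1}[u \in \mathcal{V}_L]\,\mathbf{1}[u+\Delta g \in \mathcal{V}_L]\, du \\
&= \int_{\R^k} \Pr_{L,g}\!\left[(B_u \cup B_{u+\Delta g}) \cap L = \emptyset\right] du,
\end{align*}
using that $x \in \mathcal{V}_L \iff B_x \cap L = \emptyset$ (since $0 \in L$ forces $\|x\| \leq \|x-v\|$ for all nonzero $v \in L$ to be the same as no nonzero lattice vector lying strictly inside $B_x$). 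I then partition $\R^k$ into $A := \{u : V_u \leq k/8\}$ and $A^c$.

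Over $A^c$, the integrand is at most $\Pr_L[u \in \mathcal{V}_L]$, so its integral equals $\E_L[\mathrm{vol}(\mathcal{V}_L \cap A^c)]$. Rogers' coset-coverage estimate (the expected fraction of cosets of a random unimodular $L$ not covered by a ball of volume $V$ is $\approx e^{-V}$) applied to a ball of volume $k/8$ bounds this contribution by $e^{-k/8}$, up to lower-order terms that can be folded into the stated constants. This handles the tail: for the lower bound I simply drop it, and for the upper bound it accounts for one of the $e^{-k/8}$ summands. On $A$, I apply Schmidt's estimate, which gives $\Pr_L[(B_u \cup B_{u+\Delta g}) \cap L = \emptyset] = e^{-\mathrm{vol}(B_u \cup B_{u+\Delta g})}(1+o(1))$ whenever the volume is bounded away from $k$, with the error explicitly controlled by $e^{-k/8}$-type terms thanks to the restriction $V_u \leq k/8$.

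For the lower bound I use the trivial inequality $\mathrm{vol}(B_u \cup B_{u+\Delta g}) \leq V_u + V_{u+\Delta g}$, which gives $e^{-\mathrm{vol}(\cdot)} \geq e^{-V_u - V_{u+\Delta g}}$; integrating over $A$ produces $I(\Delta^2)$, and all accumulated Schmidt/Rogers errors are absorbed into the single $-e^{-k/8}$. For the upper bound I need a matching lower bound $\mathrm{vol}(B_u \cup B_{u+\Delta g}) \geq (V_u + V_{u+\Delta g})/4$; combined with the change of variables $x = u/4^{1/k}$, whose Jacobian contributes the outer factor of $4$ and rescales the Gaussian variance from $\Delta^2/k$ to $4^{-2/k}\Delta^2/k$, this delivers $4\,I(4^{-2/k}\Delta^2)$ as the main term. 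The required volume lower bound starts from $\mathrm{vol}(B_u \cup B_{u+\Delta g}) \geq \max(V_u, V_{u+\Delta g}) \geq (V_u+V_{u+\Delta g})/2$, with the extra factor of $2$ reserved to absorb Schmidt's multiplicative error term; the remaining Rogers/Schmidt errors account for the remaining $e^{-k/8}$ contributions in $3 e^{-k/8}$.

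The hard part will be the upper bound: extracting the explicit constants $4$ and $3$ demands that Schmidt's multiplicative error on $A$ be pinned down tightly enough that it cleanly combines with Rogers' tail, rather than blowing up the prefactor. In particular, the choice of threshold $V_u \leq k/8$ must simultaneously satisfy Schmidt's admissibility window (volumes far enough below $k$ for the $e^{-\mathrm{vol}}$ approximation to be exponentially sharp) and be large enough that Rogers' estimate still makes the complement negligible --- this double-matching is where the book-keeping becomes delicate.
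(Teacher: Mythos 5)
Your skeleton is the same as the paper's: rewrite $\E_L[p_\Delta]$ as an integral of $\Pr_L[(B_x\cup B_{x+y})\cap L=\emptyset]$ against the Gaussian, split at $V_x\leq k/8$, kill the tail with Rogers' estimate, apply Schmidt on the main region, use $\tfrac12(V_x+V_{x+y})\leq V_{x,y}\leq V_x+V_{x+y}$, and rescale by $4^{-1/k}$ to produce the factor $4$ and the variance $4^{-2/k}\Delta^2/k$. However, there is a genuine gap exactly where the paper's proof does its real work (its Lemma~\ref{lem:schmidt_ap}): you apply Schmidt's theorem directly to the union $B_u\cup B_{u+\Delta g}$, but Schmidt's Theorem~4 requires the Borel set to contain no antipodal pair ($0\notin S$ and $a\in S\Rightarrow -a\notin S$). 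A single open ball with the origin on its boundary satisfies this (if $a\in B_u$ then $\|a\|^2<2\langle a,u\rangle$, which is incompatible with $-a\in B_u$), but the union of two such balls does not: $a\in B_u$ and $-a\in B_{u+\Delta g}$ only force $\|a\|^2<-\langle a,\Delta g\rangle$, which is easily satisfied. The paper copes by splitting off the symmetric part of $B_{x,y}$, halving it with a hyperplane, and applying Schmidt to an antipodal-free piece $A$ with $V_A\geq\tfrac12 V_{x,y}$; this is the true source of the exponent $\tfrac12 V_{x,y}$ in the upper bound, and, combined with $V_{x,y}\geq\tfrac12(V_x+V_{x+y})$, of the final $\tfrac14(V_x+V_{x+y})$ and hence the constant $4$. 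Your accounting --- ``$\mathrm{vol}\geq\max\geq\tfrac12(V_u+V_{u+\Delta g})$, with the extra factor of $2$ reserved to absorb Schmidt's multiplicative error'' --- lands on the right final exponent but for the wrong reason, and as stated it is not a valid step: a multiplicative error $(1\pm R)$ cannot in general be traded for halving an exponent, and when $R$ is actually small no halving is needed.

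The second missing ingredient is the range of validity of Schmidt's error term. The restriction $V_u\leq k/8$ controls only $V_u$, not $V_{u+\Delta g}$, so the union's volume is unbounded over the region you integrate; but the error $|R|<6(3/4)^{k/2}e^{4V}+V^{k-1}k^{-k+1}e^{V+k}$ is exponentially small only when $V$ is below a small constant multiple of $k$ (the paper uses $V\leq k/32$), not merely ``bounded away from $k$.'' The paper therefore needs a separate case: when $V_{x,y}>k/32$ it passes to a sub-body of volume exactly $k/32$ inside $B_{x,y}$ and bounds the avoidance probability outright by an $e^{-\Omega(k)}$ term, which is what lets the pointwise bounds $e^{-V_{x,y}}-e^{-k/4}\leq\Pr_L[x,x+y\in\mathcal{V}_L]\leq e^{-\frac12 V_{x,y}}+e^{-k/4}$ hold uniformly and the additive $e^{-k/8}$ book-keeping go through. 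Your proposal, which flags the constant-tracking as the hard part, is right that this is delicate, but the two steps above (antipodal-free reduction and the large-volume case) are the concrete mechanisms you would need to supply.
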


We now show tight bounds for $I(\Delta^2)$ for $\Delta^2 = \beta \sqrt{k}$ ,where $1 \leq \beta \leq O(\sqrt{k})$  in Lemma~\ref{lem:main}, which is the most technically delicate part of the analysis, as it involves precise balancing of parameters and taking care of minutious details.
\begin{lemma}\label{lem:main}
There exist absolute constants $K \in [0, 1], K_1, K_2, \bar{K}_1, \bar{K}_2$ such that for any $1 \leq \beta \leq K \sqrt{k}$, 
\[
\bar{K}_1 ~ e^{- \frac{\tau^2}{8} \beta \sqrt{k} \left(1 + \frac{\bar{K}_2 \beta}{\sqrt{k}} \right) }
~\leq~ I\left(\beta \sqrt{k}\right)
~\leq~ K_1~e^{- \frac{\tau^2}{8} \beta \sqrt{k} \left( 1 -  \frac{K_2 \beta}{\sqrt{k}} \right) }
\]
\end{lemma}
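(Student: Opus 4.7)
\textbf{Proof proposal for Lemma~\ref{lem:main}.} The plan is to reduce $I(\Delta^2)$ to a one-dimensional integral via spherical symmetry, locate the dominant mass, and then apply a Gaussian tail estimate to the event $\|x+y\|^2 \le r_k^2$, where $r_k := V_B^{-1/k} = \sqrt{k}/\tau$ is the radius of the unit-volume ball. Since $y \sim N(0, \Delta^2 I_k/k)$ is rotation invariant, $\E_y[e^{-V_{x+y}}]$ depends on $x$ only through $r := \|x\|$; writing $x = r e_1$, I decompose $y = y_1 e_1 + y'$ with $y_1 \sim N(0,\Delta^2/k)$ and $y' \sim N(0,(\Delta^2/k) I_{k-1})$ independent. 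Passing to spherical coordinates and substituting $s = V_x = V_B r^k$ (so that $kV_B r^{k-1}\,dr = ds$) yields
\[
I(\Delta^2) = \int_0^{k/8} e^{-s} F(s)\,ds, \quad F(s) := \E_{y_1,y'}\!\bigl[\exp(-V_B((r(s)+y_1)^2 + \|y'\|^2)^{k/2})\bigr],
\]
with $r(s) = (s/V_B)^{1/k}$ and $r(1) = r_k$.

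Writing $R := \|x+y\|^2$, we have $V_{x+y} = (\tau^2 R/k)^{k/2}$, which transitions sharply from $o(1)$ to super-polynomial in $k$ as $\tau^2 R/k$ crosses $1$. With a slack $\eta = O(\log k/k)$, the pointwise bounds
\[
e^{-1}\,\mathbf{1}[R \le r_k^2] \;\le\; e^{-V_{x+y}} \;\le\; \mathbf{1}[R \le r_k^2(1+\eta)] + e^{-(1+\eta)^{k/2}}
\]
sandwich $F(s)$, up to super-polynomially small errors, by the probabilities $Q_\eta(s) := \Pr[R \le r_k^2(1+\eta)]$ with $\eta = 0$ (lower) and $\eta = O(\log k/k)$ (upper). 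To estimate $Q_\eta(s)$, I condition on $w := \|y'\|^2$: the constraint becomes $y_1 \in [-r - L(w), -r + L(w)]$ with $L(w) := \sqrt{r_k^2(1+\eta) - w}$, so by a Mills-ratio bound on the Gaussian density of $y_1 \sim N(0,\Delta^2/k)$,
\[
\Pr[R \le r_k^2(1+\eta) \mid \|y'\|^2 = w] \;=\; \Theta\bigl(1/|z(w)|\bigr)\, e^{-z(w)^2/2}\,(1 + o(1)),
\]
where $z(w) := (r - L(w))\sqrt{k}/\Delta$. Averaging over $w$, whose density concentrates around $(k-1)\Delta^2/k$ with subgaussian fluctuations of order $\Delta^2/\sqrt{k}$, only a $w$-neighborhood of width $O(\Delta^2/\sqrt{k})$ of the mean contributes, shifting $z(w)^2/2$ by at most $O(\beta)$.

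At the natural saddle $r = r_k$ (i.e.\ $s=1$) and $w = \Delta^2$, the expansion $1 - \sqrt{1-u} = u/2 + u^2/8 + O(u^3)$ with $u = \Delta^2/r_k^2 = \tau^2\beta/\sqrt{k}$ gives $(r_k - \sqrt{r_k^2 - \Delta^2})^2 = \Delta^4/(4r_k^2)\cdot(1 + O(\tau^2\beta/\sqrt{k}))$, so
\[
\frac{z(\Delta^2)^2}{2} = \frac{k(r_k - \sqrt{r_k^2 - \Delta^2})^2}{2\Delta^2} = \frac{k\Delta^2}{8r_k^2}\bigl(1 + O(\tau^2\beta/\sqrt{k})\bigr) = \frac{\tau^2\beta\sqrt{k}}{8}\bigl(1 + O(\beta/\sqrt{k})\bigr).
\]
The constant $1/8$ arises because the Gaussian factor $1/2$ multiplies the $1/4$ from squaring the leading term $u/2$. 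For the outer $s$-integral, I split $[0,k/8]$ into $[0,1-c_1]$, $[1-c_1,1+c_1]$, and $[1+c_1,k/8]$ for a small $c_1 > 0$: the middle interval contributes $F(1)\cdot\Theta(1)$; the lower interval contributes no more because shrinking $r$ only enlarges $|z(w)|$; the upper interval is killed by $e^{-s}$. Summing yields $I(\beta\sqrt{k}) = \Theta(1/(\tau\Delta))\exp(-\tau^2\beta\sqrt{k}/8\cdot(1+O(\beta/\sqrt{k})))$, and the polynomial prefactor together with fixed-order corrections in the exponent can be absorbed into $\bar{K}_1, K_1$ by slightly adjusting $\bar{K}_2, K_2$, producing the claimed two-sided bound.

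The main technical obstacle lies in the bookkeeping of the tail estimate and the saddle expansion: one must carry the $u^2/8$ correction in $1-\sqrt{1-u}$ (this is exactly what yields the sharp $1/8$ rather than the naive $1/2$ coming from $\|x+y\|^2 \approx r^2 + \Delta^2$), handle the $\|y'\|^2$-fluctuations which sit on the same scale $\Delta^2/\sqrt{k}$ as the saddle width, and ensure that the $\eta = O(\log k/k)$ slack introduced in the indicator reduction contributes only $O(\log k/(k\beta)) = O(\beta/\sqrt{k})$ relatively, uniformly over the full range $\beta \in [1, K\sqrt{k}]$. Each piece is individually routine but their combination is where the extensive calculation of the full proof lives.
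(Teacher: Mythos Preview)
The paper itself defers the proof of Lemma~\ref{lem:main} to the full version, so there is no in-paper argument to compare against. Judged on its own, your reduction is the natural one: spherical symmetry and the substitution $s=V_x$ turn $I(\Delta^2)$ into $\int_0^{k/8} e^{-s}F(s)\,ds$, the soft indicator replaces $e^{-V_{x+y}}$ by $\mathbf{1}[\|x+y\|\le r_k]$ up to negligible slack, and the Mills-ratio estimate on $y_1$ together with the concentration of $\|y'\|^2$ produces the exponent $\tau^2\beta\sqrt{k}/8$ with a relative correction of order $\beta/\sqrt{k}$. This is sound and matches the heuristic the paper sketches in the introduction.

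There is, however, one concrete slip in the treatment of the outer integral. You write that on the interval $[0,1-c_1]$ ``shrinking $r$ only enlarges $|z(w)|$,'' and use this to bound the contribution by $O(F(1))$. The monotonicity goes the other way: since $z(w)=(r-L(w))\sqrt{k}/\Delta$ and $z(\Delta^2)>0$ at $r=r_k$, decreasing $r$ \emph{decreases} $z$ and hence $|z|$, so $F(s)$ is \emph{larger} for $s<1$, not smaller. The correct control comes from the very expansion you already computed: writing $r(s)=s^{1/k}r_k$ and expanding, the cross term in $z^2/2$ is exactly $\tfrac{1}{2}\ln s$, so $F(s)\approx F(1)\,s^{-1/2}$ throughout $s\in[s_0,1]$ with $s_0:=e^{-\tau^2\beta\sqrt{k}/2}$ the point where $z$ changes sign. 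Then
\[
\int_{s_0}^{1-c_1} e^{-s}F(s)\,ds \;\lesssim\; F(1)\int_0^1 s^{-1/2}\,ds \;=\; 2F(1),
\]
while the remaining piece $\int_0^{s_0} e^{-s}F(s)\,ds\le s_0=e^{-\tau^2\beta\sqrt{k}/2}\ll F(1)$. Similarly, your ``killed by $e^{-s}$'' for the upper interval is incomplete as stated (the integral of $e^{-s}$ over $[1+c_1,k/8]$ is $\Theta(1)$, not small); you need to combine it with the correct monotonicity $F(s)\le F(1)$ for $s\ge 1$, which does hold since enlarging $r$ enlarges $z$. With these two patches the argument closes; the rest of your outline is fine.
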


We now show how  Lemmas~\ref{lem:bound_exp_coll} and Lemma~\ref{lem:main} imply Theorem \ref{thm:main_collision_pr}.

\begin{proof}[Proof of Theorem~\ref{thm:main_collision_pr}]
First we prove the lower bound on $\E_L[p_{\Delta}]$ for $\Delta = k^{\frac14}$. 
From Lemma~\ref{lem:bound_exp_coll} and Lemma~\ref{lem:main}, we have 
\begin{align*}
\E_L[p_{\Delta}] &\geq I(\Delta^2) - e^{-k/8} \qquad \mbox{ (from Lemma~\ref{lem:bound_exp_coll}) }\\
&\geq \bar{K}_1 ~ e^{- \frac{\tau^2}{8} \sqrt{k} \left(1 + \frac{\bar{K}_2}{\sqrt{k}}  \right) } - e^{-k/8} \qquad \mbox{ (from Lemma~\ref{lem:main} with $\beta = 1$) } \\
&\geq \bar{K}_3 ~ e^{- \frac{\tau^2}{8} \sqrt{k}}.
\end{align*}

Similarly, for the upper bound on $\E_L[p_{c\Delta}]$ for $\Delta = k^{\frac14}$, we get 
\begin{align*}
\E_L[p_{c \Delta}]  & \leq 4 ~I(4^{-\frac 2k} c^2 \Delta^2) + 3e^{-k/8}  \qquad \mbox{ (from Lemma~\ref{lem:bound_exp_coll}) }\\
& \leq K_1~e^{-  4^{-\frac 2k} c^2 \frac{\tau^2}{8} \sqrt{k} \left( 1 -  \frac{K_2 c^2}{\sqrt{k}}  \right) } + 3e^{-k/8}  \qquad \mbox{ (from Lemma~\ref{lem:main} with $\beta = 4^{-\frac 2k} c^2 $) } \\
& \leq K_3 ~e^{-c^2 \frac{\tau^2}{8} \sqrt{k} \left(1 - \frac{K_2 c^2}{\sqrt{k}} \right)} \qquad \mbox{ (since $4^{-\frac 2k} \geq 1 - O(1/k)$) } .
\end{align*}
Note that since Lemma~\ref{lem:main} holds for $\beta < O(\sqrt{k})$, the upper bound on $\E_L[p_{c \Delta}]$ holds for $c^2 \leq K \sqrt{k}$ for some constant $K$.
\end{proof}

We conclude this section with the proof of Proposition~\ref{prop:p_delta} and of Corollary~\ref{cor:pdelta},  while devoting the rest of the paper for the proof of Lemma~\ref{lem:bound_exp_coll}. 
Due to space constraints, the proof of Lemma \ref{lem:main} will appear in the full version of the paper. 

\begin{proof}[Proof of Proposition~\ref{prop:p_delta}]
Let $M$ and $t$ be as defined above. From the definition of the hash function, $h(a)=h(b)$  if $Ma + t$ and $Mb + t$ land in the same Voronoi cell of $L$ about some lattice point. Let $||a-b||=\Delta$. We have 
\begin{eqnarray}\label{eq:p_delta}
\begin{split}
p_{\Delta} &= \Pr_{M, t} [h(a) = h(b)]  \\
&= \Pr_{M, t}\left[CV_L(Ma+t) = CV_L(Mb +t) \right] \\
&= \Pr_{M, t}[ Ma+t, Ma+M(b-a)+t  \text{ lie in the same Voronoi cell }].
\end{split}
\end{eqnarray}
Let $Ma + t \in {\cal V}_{L}(\ell)$ for some $\ell \in L$. Define $x := Ma + t - \ell \in {\cal V}_L$. 
Note that because of the random shift $t$, $x$ is a uniform random point in the Voronoi cell of $L$ about $0$. 

Let $y: = M(b-a) \in \R^k$. 
Since each entry $M_{ij}$ of $M$ is a Gaussian random variable with 0 mean and variance $1/k$, therefore, the $i^{th}$ entry of $y$, given as $y_i = \sum_{j=1}^k M_{ij}(b_j-a_j)$ has mean $0$ and variance $\frac 1k \sum_j (b_j - a_j)^2 = \frac{\Delta^2}{k}$.

Plugging these observations in Equation~\ref{eq:p_delta}, we get
\begin{align*}
p_{\Delta} &= \Pr_{M, t}[Ma+t -\ell, Ma+M(b-a)+t - \ell \in {\cal V}_L] \\
&= \Pr_{x \leftarrow{\cal V}_L , y \leftarrow N(0, \Delta^2 I_k / k)}[ x, x+y \in {\cal V}_L] \\
&= \Pr_{x\leftarrow{\cal V}_L, y \leftarrow N(0, \Delta^2 I_k / k) }[x+y \in  {\cal V}_L].
\end{align*}

\end{proof}

\begin{proof}[Proof of Corollary~\ref{cor:pdelta}]
By Proposition~\ref{prop:p_delta}, it suffices to show that the function 
\[
f(s) = \Pr_{x\leftarrow {\cal V}_L,y\leftarrow N(0,I_k/k)}[x+ s y \in {\cal V}_L] \text{,}
\]
where $x$ is uniform in ${\cal V}_L$ and $y$ is standard Gaussian, is a 
non-increasing 
function of $s$ on $\R_+$. Since ${\cal V}_L$ has volume $1$ and $x+sy \in {\cal
V}_L \Leftrightarrow x \in {\cal V}_L - sy$, we have that
\[
\Pr_{x,y}[x+ sy \in {\cal V}_L] = \Pr_y[{\rm vol}({\cal V}_L \cap ({\cal V}_L - sy)] \text{ .}
\]
Define $g_y(s) := {\rm vol}({\cal V}_L \cap ({\cal V}_L - sy))$. We claim that
$g_y(s)$ is non-decreasing on $(-\infty,0]$ and non-increasing on $[0,\infty)$. To
see this, note that by symmetry of ${\cal V}$, $g_y$ is symmetric, i.e. $g_y(s) =
g_y(-s)$. Furthermore, for $\lambda \in [0,1]$, $s_1,s_2 \in \R$,  
\begin{align*}
g_y(\lambda s_1 + (1-\lambda) s_2)^{1/n} 
&= {\rm vol}({\cal V}_L \cap ({\cal V}_L - \lambda (s_1 + (1-\lambda) s_2)y))^{1/n} \\
&\geq {\rm vol}(\lambda ({\cal V}_L \cap ({\cal V}_L - s_1 y)) +
(1-\lambda) ({\cal V}_L \cap ({\cal V}_L - s_2 y)))^{1/n} \\
& \qquad \qquad \qquad \left(\text{ by containment } \right) \\
&\geq \lambda {\rm vol}({\cal V}_L \cap ({\cal V}_L - s_1 y))^{1/n}
 + (1-\lambda){\rm vol}({\cal V}_L \cap ({\cal V}_L - s_2 y))^{1/n} \\
& \qquad \qquad \qquad \left(\text{ by Brunn-Minkowski } \right) \\
&= \lambda g_y(s_1)^{1/n} + (1-\lambda) g_y(s_2)^{1/n} \text{ .}
\end{align*}
Therefore, $g_y(s)^{1/n}$ is a symmetric, non-negative and concave function of
$s$. Any symmetric concave function on $\R$ must attain its maximum value at
$0$, and hence must be non-increasing away from $0$. 

Now consider $0 \leq s_1 \leq s_2$. Since $g_y$ is non-increasing on
$\R_+$, we get that
\[
f(s_1) = \E_y[g_y(s_1)] \geq \E_y[g_y(s_2)] = f(s_2)
\]
as needed.

\end{proof}

\section{Proof of Lemma~\ref{lem:bound_exp_coll} }\label{sec:proof_part_1}
In the previous section, we had seen that the expected collision probability between points which are $\Delta$ apart is defined as 
\begin{align*}
\E_L[ p_{\Delta} ] &=  \E_L\left[ \Pr_{x\leftarrow \mathcal{V}_L, y\leftarrow N(0,\Delta^2I_k/k)}[ x+y \in {\cal V}_L] \right]\\
&= \int_{x \in \R^k} \int_{y \in \R^k} \Pr_L(x, x+y \in {\cal V}_L) \cdot \frac{ e^{-\frac{\|y\|^2}{2 \sigma^2}} }{ \left( 2 \pi \sigma^2 \right)^{\frac{k}{2}}} ~dy~ dx
 \qquad \mbox{ for }  \sigma^2 = \Delta^2 / k. 
 \end{align*}

The goal of this section is to derive tight bounds for this expression through the proof of Lemma~\ref{lem:bound_exp_coll}. 

Recall that $B_x$ denotes the open $k$-dimensional ball centered at $x \in \R^k$ of radius $\|x\|$ and 
$B_{x+y}$ denotes the open $k$-dimensional ball centered at $x+y \in \R^k$ of radius $\|x+y\|$. Also, $V_x$ and $V_{x+y}$ denotes their volumes.  Consider $B_{x,y} = B_x \cup B_{x+y}$, the union of $B_x$ and $B_{x+y}$ and let $V_{x,y}$ denote its volume.
We will need the following theorem for the proof of Lemma~\ref{lem:bound_exp_coll}.

\begin{lemma}\label{lem:schmidt_ap}
\[
e^{-V_{x,y}} - e^{- k/4} ~\leq~ \Pr_L(x, x+y \in {\cal V}_L) ~\leq~  e^{-\frac12 V_{x,y}} + e^{- k/4}.
\]
\end{lemma}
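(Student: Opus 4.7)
The plan is to reformulate the event $\{x, x+y \in \mathcal{V}_L\}$ as the avoidance of a Borel set by the nonzero lattice points, then symmetrize and invoke Schmidt's probabilistic estimate for the Siegel measure on lattices. First, I would observe that $x \in \mathcal{V}_L$ is equivalent to the assertion that no nonzero lattice point lies strictly closer to $x$ than $0$, i.e.\ to $(L \setminus \{0\}) \cap B_x = \emptyset$ (the origin sits on the boundary of the open ball $B_x$ and so is automatically excluded). Applying this to both $x$ and $x+y$ simultaneously gives
\[
\{x, x+y \in \mathcal{V}_L\} \iff (L \setminus \{0\}) \cap B_{x,y} = \emptyset,
\]
so the probability of interest is exactly $\Pr_L[L \cap B_{x,y} = \emptyset]$.

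Schmidt's estimates for the Siegel measure are cleanest on centrally symmetric sets, so the next step is to symmetrize. Set $T := B_{x,y} \cup (-B_{x,y})$, so $T = -T$. Because $L = -L$, any lattice point in $-B_{x,y}$ has its negation in $B_{x,y}$, and hence $L \cap B_{x,y} = \emptyset$ iff $L \cap T = \emptyset$; moreover $0 \notin T$. The volumes satisfy the key sandwich
\[
V_{x,y} \;\le\; \text{vol}(T) \;\le\; 2V_{x,y},
\]
the upper bound being subadditivity and the lower bound being $T \supseteq B_{x,y}$.

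Now I would invoke Schmidt's theorem: for a symmetric Borel set $T$ with $\text{vol}(T)$ below a threshold of order $k$, one has, after absorbing the $1/\zeta(k) = 1 + O(2^{-k})$ factor into the constants,
\[
\bigl| \Pr_L[L \cap T = \{0\}] - e^{-\text{vol}(T)/2} \bigr| \;\le\; e^{-k/4}.
\]
Since $t \mapsto e^{-t/2}$ is decreasing, the volume sandwich yields $e^{-V_{x,y}} \le e^{-\text{vol}(T)/2} \le e^{-V_{x,y}/2}$, and combining with the $\pm e^{-k/4}$ slack recovers both inequalities in the lemma. The symmetrization is what turns the natural Schmidt exponent $\text{vol}(T)/2$ into the asymmetric bounds $V_{x,y}/2$ on the upper side and $V_{x,y}$ on the lower side.

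The main technical obstacle is extending the argument uniformly to the large-volume regime where Schmidt's sharp asymptotics are no longer available. Once $V_{x,y} \gtrsim k$, the claimed lower bound is trivially nonpositive, and the upper bound reduces to showing $\Pr \lesssim e^{-k/4}$. This can be handled by dropping down to $B_x$ alone (so that $\Pr \le \Pr_L[L \cap B_x = \emptyset]$), symmetrizing to $B_x \cup (-B_x)$, and either applying Schmidt's bound on that smaller symmetric body or invoking the Siegel mean value identity $\E_L[|(L \setminus \{0\}) \cap B_x|] = V_x$ together with Markov's inequality. Consequently the delicate work is confined to the moderate-volume range where the sharp form of Schmidt's theorem applies directly, and the geometric content of the proof is essentially the symmetrization step.
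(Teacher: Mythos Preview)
Your approach is correct and reaches the same bounds, but it is organized differently from the paper. The paper does not symmetrize $B_{x,y}$ outward to $T = B_{x,y} \cup (-B_{x,y})$ and then invoke a \emph{symmetric} form of Schmidt's estimate; instead it partitions $B_{x,y}$ inward so as to use only the antipodal-free form it has quoted. Concretely, it sets $S = \{a \in B_{x,y} : -a \in B_{x,y}\}$, bisects $S$ by a hyperplane through the origin into $S_1, S_2$, and takes $A = (B_{x,y} \setminus S) \cup S_1$. Then $A$ is antipodal-free, $\{A, S_2\}$ partitions $B_{x,y}$, and (since $L = -L$) the event $L \cap A = \emptyset$ already forces $L \cap S_2 = \emptyset$; hence $\Pr_L[B_{x,y} \cap L = \emptyset] = \Pr_L[A \cap L = \emptyset]$, to which the antipodal-free Schmidt theorem applies directly, yielding $e^{-V_A}(1 \pm R)$ with $\tfrac12 V_{x,y} \le V_A \le V_{x,y}$.

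The two routes are in fact dual: a one-line computation gives $\mathrm{vol}(T) = 2V_{x,y} - V(S)$ and $V_A = V_{x,y} - \tfrac12 V(S)$, so your exponent $\mathrm{vol}(T)/2$ and the paper's $V_A$ coincide exactly. Your symmetrization is conceptually cleaner, but it relies on the symmetric variant of Schmidt's estimate, which the paper does not state; the paper's partition lets it stay with the antipodal-free version it has available. For the large-volume regime the paper does essentially what you sketch (pass to a subset $B'_{x,y}$ of volume $k/32$ and rerun the argument). One caution on your sketch there: the Markov alternative does not bound $\Pr_L[L \cap B_x = \emptyset]$ from above, since Markov on the lattice-point count points the wrong way; you should stick with the Schmidt-on-a-subset route (or use a second-moment argument instead).
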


In order to prove Lemmas~\ref{lem:bound_exp_coll} and \ref{lem:schmidt_ap}, we invoke the following results of Rogers \cite{jour/plms/Rogers58} and Schmidt \cite{jour/pams/Schmidt58}. 

\begin{theorem}[Corollary of \cite{jour/plms/Rogers58}, Theorem 1]\label{thm:rogers}
Let $B$ be the $k$-dimensional ball of volume $V$ centered at the origin.
If $V \leq \frac k8$,  then there exists a constant $k_0$ such that for $k > k_0$, 
\[
\left|~\int_{x \in \R^k} \Pr_{L} \left[ x \in {\cal V}_L \setminus B \right]~dx - e^{-V}~\right| \leq  c_1k^3 \left( \frac{16}{27}\right)^{\frac k4}
\]
where, the probability is taken over the space of all lattices of determinant $1$.   
\end{theorem}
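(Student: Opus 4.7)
The plan is to reduce the claim to a direct invocation of Rogers's Theorem~1 in~\cite{jour/plms/Rogers58} via Fubini--Tonelli and a geometric reinterpretation of the Voronoi cell as a fundamental domain of $L$. First, since the indicator $\mathbf{1}[x \in \mathcal{V}_L \setminus B]$ is jointly measurable and nonnegative, Fubini--Tonelli lets me interchange the Lebesgue integral over $\R^k$ with the expectation over the Siegel measure on $k$-dimensional lattices of determinant~$1$:
\[
\int_{x \in \R^k} \Pr_L\!\left[x \in \mathcal{V}_L \setminus B\right] dx
 \;=\; \E_L\!\left[\int_{x \in \R^k} \mathbf{1}[x \in \mathcal{V}_L \setminus B]\, dx\right]
 \;=\; \E_L\!\left[{\rm vol}(\mathcal{V}_L \setminus B)\right].
\]

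Next I identify ${\rm vol}(\mathcal{V}_L \setminus B)$ with a covering-theoretic quantity that Rogers's result directly estimates. Because $\mathcal{V}_L$ contains exactly one representative from each coset $c \in \R^k/L$, namely the one of smallest norm, a coset has its Voronoi representative outside $B$ if and only if no element of the coset lies in $B$: if some $c' \in c$ had $\|c'\| < r$ then the smallest-norm representative would also lie in $B$, and conversely if the representative itself has norm $>r$ then every translate by a nonzero lattice vector has even larger norm. Therefore ${\rm vol}(\mathcal{V}_L \setminus B)$ is exactly the measure, inside any fundamental domain of $L$, of the cosets disjoint from $B$, i.e.\ it equals $1$ minus the fraction of $\R^k/L$ covered by the tiling $L + B$.

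Finally, Rogers's Theorem~1 evaluates precisely this expected uncovered fraction for a symmetric Borel set of small volume: under the Siegel measure on determinant-$1$ lattices, its mean equals $e^{-V}$ up to an additive error of order $k^3 (16/27)^{k/4}$, provided $V \leq k/8$. Combining the three steps gives the stated inequality. The real content of the proof lies inside Rogers's theorem itself, whose argument expands $\mathbf{1}[\text{coset uncovered}]$ by inclusion--exclusion into factorial moments $\E_L \binom{|L \cap (B-c)|}{j}$, integrates Rogers's explicit moment formulas over $c$ in a fundamental domain, and matches the resulting alternating series with the Taylor expansion of $e^{-V}$; the base $16/27$ and the threshold $V \leq k/8$ originate in volumetric estimates controlling when ``degenerate'' sublattice configurations first contribute nontrivially. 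For the present corollary the only work is the Fubini swap and the Voronoi-fundamental-domain identification above; verifying the symmetric-Borel-set hypothesis on an origin-centered ball is immediate, and no other obstacle arises.
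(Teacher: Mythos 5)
The paper states this result as a corollary of Rogers' Theorem 1 and gives no proof at all, so there is nothing in the paper to compare against line by line; what you have supplied is exactly the omitted derivation. Your reduction is correct: Fubini--Tonelli turns the integral of a pointwise probability into $\E_L[\mathrm{vol}(\mathcal{V}_L \setminus B)]$, and the observation that the Voronoi cell picks out the minimum-norm representative of each coset, combined with $B$ being an origin-centered ball, shows that a coset is disjoint from $B$ precisely when its Voronoi representative lies outside $B$, so $\mathrm{vol}(\mathcal{V}_L \setminus B)$ is the uncovered fraction of $\R^k/L$ under $L + B$. That is the quantity whose Siegel-measure average Rogers estimates, and the hypotheses ($B$ a Borel set of volume $V \le k/8$) are met. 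Two small caveats, neither fatal: the ``smallest-norm representative has norm $>r$'' step should really be $\ge r$ since $B$ is open (the conclusion survives because the larger-norm translates are then also $\ge r$, hence outside the open ball); and the explicit constants $c_1 k^3 (16/27)^{k/4}$ and the threshold $V \le k/8$ are inherited from Rogers and you are right to flag that they come from his estimates rather than rederive them, which matches the paper's level of detail. You also slightly conflate the symmetry hypothesis used by Schmidt's Theorem 4 (the $-x \notin S$ condition, invoked for Lemma~\ref{lem:schmidt_ap}) with Rogers' Theorem~1; the centered ball is symmetric so nothing breaks, but the two cited results have different shapes of hypotheses and it is worth keeping them distinct.
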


\begin{theorem}[\cite{jour/pams/Schmidt58}, Theorem 4]\label{thm:schmidt}
Let $S$ be a Borel set of measure $V$ such that $0 \notin S$ and for all $x \in S$, $-x \notin S$. If $V \leq k-1$,  then for $k \geq 13$, 
 
\[
\Pr_L\left[ L \cap S = \emptyset \right] = e^{-V}\left(1 - R\right).
\]
where, the probability is taken over the space of all lattices of determinant $1$ and 
$\lvert R  \rvert < 6 \left( \frac 34 \right)^{\frac k2} e^{4V} + V^{k-1}k^{-k + 1} e^{V + k}$. 
\end{theorem}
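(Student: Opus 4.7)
The plan is to prove the identity by the method of factorial moments, leveraging Rogers' higher-dimensional mean value formulas for the Siegel measure on the space of unimodular lattices in $\R^k$. Write $N := |L \cap S|$. The hypotheses $0 \notin S$ and $-S \cap S = \emptyset$ guarantee that $N$ counts only nonzero points and that no two points of $L \cap S$ are antipodal, so the built-in symmetry $v \mapsto -v$ of every lattice does not produce any spurious coincidences in the moment formulas. Since $N$ is almost surely finite, the Bonferroni identity
\[
 \Pr_L[L \cap S = \emptyset] \;=\; \Pr_L[N = 0] \;=\; \sum_{m \ge 0} \frac{(-1)^m}{m!}\, \E_L[(N)_m]
\]
holds, where $(N)_m := N(N-1)\cdots(N-m+1)$ is the falling factorial.

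The next step is to apply Rogers' formula to compute $\E_L[(N)_m]$ for $m \le k-1$. Rogers decomposes this factorial moment according to the rank of the sublattice spanned by $(v_1,\dots,v_m) \in (L\cap S)^m$: the ``generic'' stratum, where the tuple is linearly independent, contributes exactly $V^m$ by Siegel's mean value theorem applied in $(\R^k)^m$, while each lower-rank stratum contributes a correction parametrized by primitive sublattices of $\Z^m$ and weighted by reciprocals of values $\zeta(k),\zeta(k-1),\dots$. Standard tail bounds give each such zeta factor as $1 + O((3/4)^{k/2})$. Substituting the resulting expansion into the Bonferroni sum for $m \le k-1$ and majorizing the combinatorial weights by $4^m$ produces the first error contribution $6(3/4)^{k/2} e^{4V}$, where the exponential $e^{4V}$ absorbs the combinatorial sum $\sum_m (4V)^m/m!$.

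Finally, because no $k$ vectors in $\R^k$ can be linearly independent, Rogers' generic decomposition breaks down for $m \ge k$, so the Bonferroni series must be truncated at $m = k-1$ and the tail must be estimated separately. Comparing the truncated alternating sum with $e^{-V} = \sum_{m \ge 0} (-V)^m/m!$ produces a Taylor remainder of order $V^{k-1}/(k-1)!$; combining this with Stirling's estimate $m! \ge (k/e)^m$ valid for $m \ge k$, together with the crude Siegel-integration bound $\E_L[(N)_m] \le C\, V^m$ in the dependent regime, yields the second error $V^{k-1} k^{-k+1} e^{V+k}$. Assembling the two error contributions into the single remainder $R$ delivers the claim. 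The main obstacle is handling this truncation consistently: the Rogers decomposition is genuinely clean only in the independent regime $m \le k-1$, so both the zeta-type corrections within the truncated sum and the degenerate tail $m \ge k$ must be tracked in parallel, with the cutoff chosen so that the two errors balance against the standing hypothesis $V \le k-1$.
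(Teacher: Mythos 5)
The paper does not prove this statement; it is imported verbatim, with its explicit constants, from Schmidt's 1958 paper, so there is no in-paper argument to compare against.

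That said, your sketch does capture the broad architecture of Schmidt's actual proof: expand $\Pr_L[L\cap S=\emptyset]$ by inclusion--exclusion in factorial moments of $N=|L\cap S|$, evaluate those moments via the Rogers/Siegel mean-value formulas, truncate the expansion at $m=k-1$, and bound the remainder. However, two of the justifications you supply at the most delicate points are wrong. First, the reason you give for the breakdown at $m\geq k$ --- ``no $k$ vectors in $\R^k$ can be linearly independent'' --- is false: $k$ vectors in $\R^k$ can certainly be independent (e.g.\ a basis); it is $k+1$ vectors that are forced to be dependent. Rogers' $m$-th moment formula is restricted to $m\leq k-1$ for an entirely different reason, namely convergence of the moment integral and of the normalizing constants in his sublattice decomposition (which involve values $\zeta(k-m+1),\dots,\zeta(k)$), so the correct cutoff falls out of the analytic structure of the formula, not from linear algebra.

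Second, you assert the full infinite Bonferroni identity as an exact equality merely ``since $N$ is almost surely finite,'' but a.s.\ finiteness does not by itself license interchanging the infinite alternating sum with the Siegel-measure expectation; one needs control on the growth of $\E_L[(N)_m]$, which is precisely what is unavailable for $m\geq k$. Schmidt instead uses the finite Bonferroni inequalities at consecutive odd/even truncation indices to sandwich $\Pr[N=0]$ from both sides, which sidesteps the interchange entirely. Finally, the specific constants in the error bound (the $6$, the $e^{4V}$ absorbing the combinatorial sum, the $V^{k-1}k^{-k+1}e^{V+k}$ tail) emerge only from explicit bookkeeping of the sublattice strata in Rogers' formula that your sketch names but does not carry out; claiming the outline ``delivers'' those constants overstates what has been shown. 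In short: right skeleton, but incorrect rationale at the two places where the argument most needs care, and the quantitative content is asserted rather than derived.
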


\begin{fact}
\[ \frac12 \left( V_x + V_{x+y} \right) \leq V_{x,y} \leq V_x + V_{x+y}.\]
\end{fact}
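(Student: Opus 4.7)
The plan is to prove each of the two inequalities separately using only elementary properties of Lebesgue measure on $\R^k$; neither direction uses the fact that $B_x$ and $B_{x+y}$ are balls beyond their role as measurable sets of volumes $V_x$ and $V_{x+y}$.

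For the upper bound $V_{x,y} \le V_x + V_{x+y}$, I will invoke subadditivity of Lebesgue measure (the union bound). Since $B_{x,y} = B_x \cup B_{x+y}$ by definition, we have $\mathrm{vol}(B_x \cup B_{x+y}) \le \mathrm{vol}(B_x) + \mathrm{vol}(B_{x+y})$, which is precisely $V_{x,y} \le V_x + V_{x+y}$.

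For the lower bound $\tfrac12(V_x + V_{x+y}) \le V_{x,y}$, the cleanest route is monotonicity: since $B_x \subseteq B_{x,y}$ and $B_{x+y} \subseteq B_{x,y}$, we get $V_{x,y} \ge \max(V_x, V_{x+y})$, and then $\max(V_x, V_{x+y}) \ge \tfrac12(V_x + V_{x+y})$ because the maximum of two nonnegative reals is always at least their average. An equivalent derivation is via inclusion-exclusion, $V_x + V_{x+y} = V_{x,y} + \mathrm{vol}(B_x \cap B_{x+y})$, combined with the trivial containment $B_x \cap B_{x+y} \subseteq B_{x,y}$ giving $\mathrm{vol}(B_x \cap B_{x+y}) \le V_{x,y}$.

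There is no meaningful obstacle here; the statement is a bookkeeping fact whose purpose is to let later arguments pass between $V_{x,y}$ (which appears in Schmidt's no-lattice-point estimate applied to $B_x \cup B_{x+y}$) and the more convenient sum $V_x + V_{x+y}$ when bounding $e^{-V_{x,y}}$ inside the integral defining $\E_L[p_\Delta]$.
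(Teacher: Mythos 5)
Your proof is correct and matches the paper's in substance: the upper bound is subadditivity in both cases, and for the lower bound the paper bounds $\mathrm{vol}(B_x \cap B_{x+y})$ by the volume of the smaller ball and feeds this into inclusion-exclusion, which is the same inequality you state directly as $V_{x,y} \ge \max(V_x, V_{x+y}) \ge \tfrac12(V_x + V_{x+y})$.
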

\begin{proof}
Let WLOG, $V_x \leq V_{x+y}$. Also, we know that $V_{x,y} = V_x + V_{x+y} - V(B_x \cap B_{x+y})$. We now show that $V(B_x \cap B_{x+y}) \leq \frac12 \left( V_x + V_{x+y} \right) $. This fact follows easily from the observation that  the intersection volume is at most the volume of the smaller ball. Therefore, 
\begin{align*}
V(B_x \cap B_{x+y}) &\leq V_x = \frac12 V_x + \frac12 V_x \leq \frac12 \left( V_x + V_{x+y} \right).
\end{align*}
\end{proof}
We now prove Lemma~\ref{lem:bound_exp_coll} using Lemma~\ref{lem:schmidt_ap}.
\begin{proof}[Proof of Lemma~\ref{lem:bound_exp_coll} ]
For notational convenience, we will use $\sigma^2$ to denote $\Delta^2 / k$. From the definition of $p_{\Delta}$ and Proposition~\ref{prop:p_delta}, we have 
\begin{eqnarray}\label{eq:exp_coll_pr}
\begin{split}
\E_L[ p_{\Delta} ] &= \E_L\left[ \Pr_{x\leftarrow \mathcal{V}_L, y\leftarrow N(0,\sigma^2 I_k)}[ x+y \in {\cal V}_L] \right]\\
&= \int_{x \in \R^k} \int_{y \in \R^k} \Pr_L(x, x+y \in {\cal V}_L) \cdot \frac{ e^{-\frac{\|y\|^2}{2 \sigma^2}} }{ \left( 2 \pi \sigma^2 \right)^{\frac{k}{2}}} ~dy~ dx\\
&=\int_{x \in \R^k : V_x \leq \frac k8} \int_{y \in \R^k } \Pr_L(x, x+y \in {\cal V}_L) \cdot \frac{ e^{-\frac{\|y\|^2}{2 \sigma^2}} }{ \left( 2 \pi \sigma^2 \right)^{\frac{k}{2}}} ~dy~ dx \\
&\qquad \qquad +\int_{x \in \R^k: V_x > \frac k8 } \int_{y \in \R^k} \Pr_L(x, x+y \in {\cal V}_L) \cdot \frac{ e^{-\frac{\|y\|^2}{2 \sigma^2}} }{ \left( 2 \pi \sigma^2 \right)^{\frac{k}{2}}} ~dy~ dx.
\end{split}
\end{eqnarray}
We first note that if $V_x \geq \frac k8$, then the probability that $x\in {\cal V}_L$ is itself very small. This fact gives us tight bounds on $\E_L[ p_{\Delta} ] $ up to additive $e^{-\Omega(k)}$ term. We use  
Theorem~\ref{thm:rogers} to formalize this statement. Let $B_{0}$ be  the $0$ centered ball of volume $\frac k8$. We have,
\begin{align*}
\int_{x \in \R^k : V_x \geq \frac k8} & \int_{y \in \R^k} \Pr_L(x, x+y \in {\cal V}_L) \cdot \frac{ e^{-\frac{\|y\|^2}{2 \sigma^2}} }{ \left( 2 \pi \sigma^2 \right)^{\frac{k}{2}}} ~dy~ dx \\
&\leq \int_{x \in \R^k : V_x \geq \frac k8} \int_{y \in \R^k} \Pr_L(x \in {\cal V}_L) \cdot \frac{ e^{-\frac{\|y\|^2}{2 \sigma^2}} }{ \left( 2 \pi \sigma^2 \right)^{\frac{k}{2}}} ~dy~ dx\\
&= \int_{x \in \R^k : V_x \geq \frac k8} \Pr_L(x \in {\cal V}_L) dx\\
&= \int_{x \in \R^k} \Pr_L(x \in {\cal V}_L \setminus B_{0}) ~dx \\
&= e^{-\frac k8} + e^{-\frac k8}. \qquad (\mbox{ from Theorem~\ref{thm:rogers}  })
\end{align*}

Plugging this observation into the expression for $\E_L[p_{\Delta}]$ in Equation~\ref{eq:exp_coll_pr}, we get that
\begin{align*}
\int_{x \in \R^k : V_x \leq \frac k8} & \int_{y \in \R^k } \Pr_L(x, x+y \in {\cal V}_L) \cdot \frac{ e^{-\frac{\|y\|^2}{2 \sigma^2}} }{ \left( 2 \pi \sigma^2 \right)^{\frac{k}{2}}} ~dy~ dx \\
& \qquad \qquad \leq \E_L[ p_{\Delta} ] \\
& \leq \int_{x \in \R^k : V_x \leq \frac k8} \int_{y \in \R^k } \Pr_L(x, x+y \in {\cal V}_L) \cdot \frac{ e^{-\frac{\|y\|^2}{2 \sigma^2}} }{ \left( 2 \pi \sigma^2 \right)^{\frac{k}{2}}} ~dy~ dx  + 2e^{- k/8}.
\end{align*}

Further, using the bounds on $\Pr_L(x, x+y \in {\cal V}_L)$ from Lemma~\ref{lem:schmidt_ap}, we get 
\begin{align*}
& \int_{x \in \R^k: V_x \leq \frac k8} \int_{y \in \R^k} \left( e^{-V_{x,y}}  - e^{- k/4} \right)\cdot  \frac{ e^{-\frac{\|y\|^2}{2 \sigma^2}} }{ \left( 2 \pi \sigma^2 \right)^{\frac{k}{2}}} ~dy~ dx \\
&\qquad \qquad \leq \E_L[ p_{\Delta} ] \\
&\leq  \int_{x \in \R^k: V_x \leq \frac k8} \int_{y \in \R^k} \left( e^{-\frac12 V_{x,y}} + e^{- k/4}\right)\cdot \frac{ e^{-\frac{\|y\|^2}{2 \sigma^2}} }{ \left( 2 \pi \sigma^2 \right)^{\frac{k}{2}}} ~dy~ dx   + 2e^{- k/8}. 
\end{align*}

Since $V_{x,y} \leq V_x + V_{x+y}$, the lower bound in the theorem statement then follows trivially. 
\begin{align*}
\E_L[ p_{\Delta} ]  &\geq \int_{x \in \R^k: V_x \leq \frac k8} \int_{y \in \R^k} \left( e^{-V_{x,y}}  - e^{- k/4} \right)\cdot  \frac{ e^{-\frac{\|y\|^2}{2 \sigma^2}} }{ \left( 2 \pi \sigma^2 \right)^{\frac{k}{2}}}~dy~ dx \\
&= \int_{ \substack{x \in \R^k\\ V_x \leq \frac k8} } \int_{y \in \R^k} e^{-V_{x,y}} \frac{ e^{-\frac{\|y\|^2}{2 \sigma^2}} }{ \left( 2 \pi \sigma^2 \right)^{\frac{k}{2}}}~dy~ dx -  \int_{ \substack{x \in \R^k\\V_x \leq \frac k8} } \int_{y \in \R^k}  e^{- k/4} \frac{ e^{-\frac{\|y\|^2}{2 \sigma^2}} }{ \left( 2 \pi \sigma^2 \right)^{\frac{k}{2}}}~dy~ dx \\
&\ge \int_{x \in \R^k: V_x \leq \frac k8} \E_{y\sim N(0, \sigma^2~I_k)} \left[ e^{-V_x - V_{x+y}}\right]~dx - \frac k8 e^{- k/4} \\
&\ge \int_{x \in \R^k: V_x \leq \frac k8} \E_{y\sim N(0, \sigma^2~I_k)} \left[ e^{-V_x - V_{x+y}}\right]~dx - e^{- k/8} 
\end{align*}

For the upper bound, set $u = 4^{-\frac 1k} x$, and $v = 4^{-\frac 1k} y$. Since $\frac12 V_{x,y} \geq \frac{V_x +V_{x+y}}{4} = V_u + V_{u+v}$, we have  
\begin{align*}
\E_L[ p_{\Delta} ] 
&\leq  \int_{x \in \R^k: V_x \leq \frac k8} \int_{y \in \R^k} \left( e^{-\frac12 V_{x,y}} + e^{- k/4} \right) \cdot \frac{ e^{-\frac{\|y\|^2}{2 \sigma^2}} }{ \left( 2 \pi \sigma^2 \right)^{\frac{k}{2}}}~dy~ dx   + 2e^{- k/8} \\
&\leq \int_{x \in \R^k} \int_{y \in \R^k} e^{-\frac{V_x +V_{x+y}}{4}} \frac{e^{-\frac{\|y\|^2}{2\sigma^2}}}{\left( 2\pi \sigma^2\right)^{\frac{k}{2}}} ~dy~dx  + 3e^{- k/8} \\
&= \int_{u \in \R^k} \int_{v \in \R^k} e^{-V_u - V_{u+v}} \frac{e^{-\frac{\|v\|^2~4^{\frac{2}{k}}}{2\sigma^2}}}{\left( 2\pi \sigma^2\right)^{\frac{k}{2}}} ~4 dv~ 4 du + 3e^{- k/8} \\
&=  4 \int_{u \in \R^k} \int_{v \in \R^k} e^{-V_u - V_{u+v}} \frac{e^{-\frac{\|v\|^2}{2 (4^{-\frac{1}{k}}\sigma)^2}}}{\left( 2 \pi (4^{-\frac{1}{k}}\sigma)^2\right)^{\frac{k}{2}}} ~dv~du  + 3e^{- k/8}\\
&=  4 \int_{u \in \R^k}  \E_v \left[ e^{-V_{u}-V_{u+v}} \right]~du + 3e^{- k/8} ~~\text{ where, } v \sim N(0, 4^{-\frac 2k} \sigma^2~I_k).
\end{align*}
\end{proof}

Now it remains to prove Lemma~\ref{lem:schmidt_ap}.

\begin{proof}[Proof of Lemma~\ref{lem:schmidt_ap}]
Recall that $B_{x,y} = B_x \cup B_{x+y}$, the union of $B_x$ and $B_{x+y}$ and $V_{x,y}$ denotes its volume. 
We note that $x$ and $x+y$ are in the voronoi cell of a lattice $L$ if and only if $B_{x,y}$ does not contain any lattice points. Therefore, 
\begin{align*}
\Pr_L\left[ x, x+y \in {\cal V}_L \right] &= \Pr_L\left[ B_{x,y} \cap L = \emptyset \right] 
\end{align*}

As a first case, suppose $V_{x,y} < \frac{k}{32}$.  Now consider the following partition of $B_{x,y}$. Let $S$ be the set of points $a \in B_{x,y}$ such that $-a \in B_{x,y}$. 
\[ S = \{ a \in B_{x,y} \mid -a \in B_{x,y} \}. \]
Partition $S$ with respect to an arbitrary hyperplane as follows: Define $S_1 = \{ a \in S \mid a^t y < 0 \}$ and $S_2 = S \setminus S_1$ for an arbitrarily chosen $y\in \R^k$. Note that for every $a \in S_1$, $-a \in S_2$. Define $A = (B_{x,y} \setminus S) \cup S_1$. Note that $\{ A, S_2 \}$ is a partition of $B_{x,y}$, i.e., $B_{x,y} = A \cup S_2$, and $A \cap S_2 = \emptyset$. 

Without loss of generality, assume that $A$ is the larger partition of $B_{x,y}$, i.e $V_A \geq \frac12 V_{x,y}$.  
Also from the definition of $A$ and $S_2$, we have that if $A \cap L = \emptyset$, then $S_2 \cap L = \emptyset$. We can now apply Theorem~\ref{thm:schmidt} for both $A$ and $S_2$. 
\begin{align*}
\Pr_L\left[ B_{x,y} \cap L = \emptyset \right] 
& =  \Pr_L\left[  (A \cap L = \emptyset  ), (S_2 \cap L  = \emptyset) \right] \\
&= \Pr_L\left[ A \cap L = \emptyset  \right] ~\Pr_L\left[  (S_2 \cap L = \emptyset  ) \mid  (A \cap L  = \emptyset) \right]\\
&= \Pr_L\left[ A \cap L = \emptyset  \right]\\
&=e^{-V_{A}}\left(1 - R_{A} \right) \qquad \mbox{ where, } \lvert R_{A} \rvert = 6 \left( \frac 34 \right)^{\frac k2} e^{4V_A} + V_A^{k-1}k^{-k + 1} e^{V_A + k}.
\end{align*}
Since $\frac12 V_{x,y} \leq V_{A} \leq V_{x,y} < \frac{k}{32}$, we have 
$\lvert R_{A} \rvert < e^{- k/4}$.
Therefore, 
\[
e^{-V_{x,y}}\left( 1- e^{- k/4} \right) \leq \Pr_L\left[ B_{x,y} \cap L = \emptyset \right]  \leq e^{-\frac 12V_{x,y}} \left( 1 + e^{- k/4} \right).
\]

Next, suppose $V_{x,y} > \frac{k}{32}$. Then consider a body $B_{x,y}'$ contained in $B_{x,y}$ of volume $\frac{k}{32}$. Using a similar argument as above with $B_{x,y}$ replaced with $B_{x, y}'$, we conclude that
\[
\Pr_L\left[ B_{x,y} \cap L = \emptyset \right] \leq \Pr_L\left[ B_{x,y}' \cap L = \emptyset \right]  \leq e^{- k/4}.
\]

\end{proof}

\bibliographystyle{plainurl}
\bibliography{refLSH}

\end{document}